\newtheorem{theorem}{Theorem}
\newtheorem{corollary-theorem}[theorem]{Corollary}
\newtheorem{corollary-proposition}[theorem]{Corollary}
\newtheorem{lemma}[theorem]{Lemma}
\newtheorem{corollary-lemma}[theorem]{Corollary}
\begin{document}


\title{Qudit-based quantum error-correcting codes from\\irreducible representations of $\mathrm{SU}(d)$}

\author{Robert Frederik Uy}
 \email{rfu20@cam.ac.uk}
\affiliation{%
 Peterhouse, University of Cambridge, Cambridge CB2 1RD, United Kingdom
}%
\affiliation{
 Cavendish Laboratory, University of Cambridge, Cambridge CB3 0HE, United Kingdom
}%

\author{Dorian A. Gangloff}
 \email{dag50@cam.ac.uk}
\affiliation{
 Cavendish Laboratory, University of Cambridge, Cambridge CB3 0HE, United Kingdom
}%

\date{\today}

\begin{abstract}
Qudits naturally correspond to multi-level quantum systems, which offer an efficient route towards quantum information processing, but their reliability is contingent upon quantum error correction capabilities. In this paper, we present a general procedure for constructing error-correcting qudit codes through the irreducible representations of $\mathrm{SU}(d)$ for any odd integer $d \geq 3$. Using the Weyl character formula and inner product of characters, we deduce the relevant branching rules, through which we identify the physical Hilbert spaces that contain valid code spaces. We then discuss how two forms of permutation invariance and the Heisenberg-Weyl symmetry of $\mathfrak{su}(d)$ can be exploited to simplify the construction of error-correcting codes. Finally, we use our procedure to construct an infinite class of error-correcting codes encoding a logical qudit into $(d-1)^2$ physical qudits.
\end{abstract}

\maketitle


\section{Introduction}
Despite our best efforts to suppress quantum noise, it continues to stymie reliable large-scale quantum computation \cite{Preskill2018qcnisq, RevModPhys.95.045005}. Quantum states are inextricably coupled to their environment, and this inevitably leads to the corruption of fragile quantum information \cite{PRXQuantum.4.020307, Hu2018}. Good quantum error-correcting codes are thus vital to the realization of the full potential of quantum computers \cite{PhysRevA.108.062403,Bravyi2024}.

Quantum error correction is typically performed by encoding quantum information across a sufficiently large number of physical states \cite{CAI202150}. Although most quantum error-correcting codes encode logical qubits into multiple physical qubits, qudit-based computation offers a myriad of advantages. One theoretical benefit is that the number of computational units needed to obtain a logical Hilbert space of the same dimension is reduced by a factor of $\log_2 d$ \cite{PRXQuantum.4.030327}, where $d$ is the dimension of one qudit. Moreover, from an experimental perspective, most physical systems are not two-dimensional \cite{Nikolaeva2024}. Therefore, utilizing them as qudits is easier and eliminates a source of error since it obviates the need to isolate the relevant two-dimensional system from the other states in the physical Hilbert space \cite{PhysRevResearch.6.013050}.

While error-correcting codes for qudits have been constructed before \cite{PhysRevX.6.031006,OUYANG201743,PhysRevA.101.042305,PhysRevA.105.042427,dutta2024}, representation theory may offer new insight into this. In \cite{Gross2021}, Gross explored how to construct error-correcting codes through the irreducible representations of $\mathrm{SU}(2)$. These are suitable for encoding qubits into single-spin systems since they possess $\mathfrak{su}(2)$ symmetry. However, the concept behind his procedure could be generalized to other physical systems with a Lie algebra of Hamiltonians \cite{Gross2021}, including $\mathfrak{su}(d)$-symmetric qudits. The idea is to choose our code space within a vector space on which some irrep of a Lie group $G$, restricted to a finite subgroup $K \leq G$, is defined. By construction, we can implement any operation $\mathsf{O} \in K$ on the logical qudit, and the code space is endowed with a Lie algebra $\mathfrak{g}$ of physically relevant error operators.

Recently, Herbert et al. worked on the generalization to $\mathrm{SU}(3)$, yielding a method for constructing error-correcting codes for qutrits \cite{Herbert2023}. The natural next step would be to consider the general case of $\mathrm{SU}(d)$ and, through this, formulate a recipe for constructing error-correcting codes that encode a logical qudit of higher dimension into multiple physical qudits (Figure \ref{fig:logical qudit}). This is by no means a simple task, not least because it requires us to deal with arbitrarily large code spaces and arbitrarily large sets of error correction conditions. However, as we will soon show, symmetry can considerably simplify the problem.

\begin{figure}[b]
	\centering
	\includegraphics[width=\columnwidth]{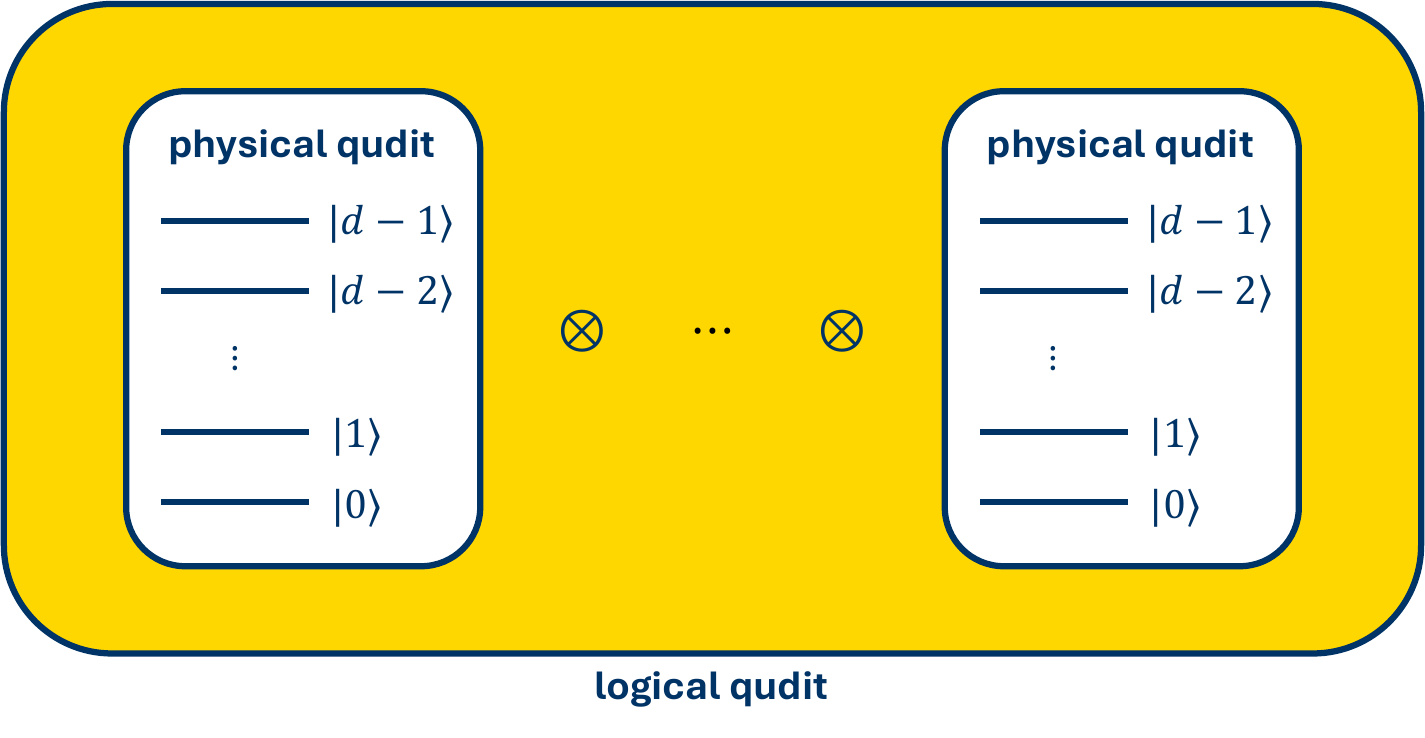}
	\caption{\label{fig:logical qudit} A logical qudit encoded into multiple physical qudits, each of dimension $d$.}
\end{figure}

In this paper, we present a general procedure for constructing quantum error-correcting codes for qudits of odd dimension $d \geq 3$ through the irreducible representations of $\mathrm{SU}(d)$. To allow error correction, we restrict the $\mathrm{SU}(d)$ irreps to $\mathrm{HW}(d)$, the Heisenberg-Weyl group, which was chosen since its image under certain irreps contains the qudit generalizations of the qubit Pauli operators \cite{Gottesman1999,PhysRevA.86.022308,Sarkar2024}. We then determine the branching rules for $\mathrm{SU}(d) \downarrow \mathrm{HW}(d)$, from which we can deduce which irreps are defined on symmetric vector spaces that contain valid code spaces. This is followed by the use of the Heisenberg-Weyl symmetry of $\mathfrak{su}(d)$ to show that some error-correction conditions are equivalent to each other and thereby obtain a reduced set of sufficient conditions for a code to be error-correcting. We then show that it is convenient to consider so-called \textit{sparse} and \textit{doubly permutation-invariant} codes and use these simplifications to construct a class of error-correcting codes encoding a logical qudit, for any odd integer $d \geq 5$, into $(d-1)^2$ physical qudits.

\section{Irreps and code spaces}
A \textit{code} $\mathcal{C}$ is defined as an embedding of a logical Hilbert space $\mathcal{H}_L$ into a physical Hilbert space $\mathcal{H}_P$ \cite{PhysRevX.10.041018,gottesman2024, Liu2023,10.21468/SciPostPhys.12.5.157,PhysRevA.107.032411}. The image of $\mathcal{C}$ is called the \textit{code space}, whose elements are referred to as \textit{code words} \cite{PhysRevX.10.041018, gottesman2024}. Since we are dealing with qudits, the logical Hilbert space is $\mathcal{H}_L = \mathbf{C}^d$, for which we use the orthonormal basis $\{\ket{k}_L \mid k \in \mathbf{Z}_d\}$. We then denote by $\ket{\overline{k}}$ the image of $\ket{k}_L$ under $\mathcal{C}$, i.e. the code word corresponding to the logical qudit state $\ket{k}_L$.


\subsection{Irreps of $\mathrm{SU}(d)$ on symmetric spaces}
We choose $\mathcal{H}_P$ to be a vector space on which an irrep of $\mathrm{SU}(d)$ is defined, thereby ensuring that the code space is preserved by any $\mathrm{SU}(d)$ operation (i.e. any logical single-qudit gate, up to a global phase factor). Out of all these vector spaces, it would be far easier mathematically to consider symmetric vector spaces; constructing a basis for such spaces is simple and the action of the error operators in $\mathfrak{su}(d)$ obeys a straightforward and easily generalizable pattern. Physically, this is also a natural choice. By the symmetrization postulate for indistinguishable particles, the states of a system of bosons are necessarily permutation-invariant (or totally symmetric) \cite{zwiebach2022mastering,Ouyang2024newtakepermutation,PhysRevLett.85.194,PhysRevResearch.5.033018}. Moreover, the ground subspace of any Heisenberg ferromagnet's Hilbert space contains permutation-invariant states \cite{PhysRevA.90.062317,Aydin2024familyof}, and this model of quantum magnetism is naturally realized in numerous physical systems \cite{PhysRevB.103.144417}. We therefore aim to construct codes that map to a symmetric physical Hilbert space.

In line with this, we give a brief overview of some relevant results about the irreps of $\mathrm{SU}(d)$, particularly those defined on symmetric vector spaces. For further information, readers may consult Refs. \cite{hall2015lie,Georgi:1999wka,fulton2013representation}.

We begin by noting that there is a one-to-one correspondence between the irreps of $\mathrm{SU}(d)$, $\mathfrak{su}(d)$, and $\mathfrak{sl}(d,\mathbf{C}) \cong \mathfrak{su}(d)_\mathbf{C}$ \cite{hall2015lie}. Thus, in a not-so-egregious abuse of notation, we will denote these maps by the same symbol. More importantly, this means that we can study the irreps of $\mathrm{SU}(d)$ by applying the general theory of semisimple Lie algebras to $\mathfrak{sl}(d,\mathbf{C})$.

By the theorem of the highest weight \cite{hall2015lie}, each irrep of $\mathfrak{sl}(d,\mathbf{C})$ is uniquely determined by its highest weight $\boldsymbol{\omega}$, which in turn can be identified with a multiplet $(\omega_0,\cdots,\omega_{d-2}) \in \mathbf{N}_0^{d-1}$. Moreover, it can be shown that the irrep with highest weight $\boldsymbol{\omega} = (N,0,\cdots,0)$, for any $N \in \mathbf{N}$, is on a symmetric vector space $\mathrm{Sym}^N(\mathbf{C}^d)$. We shall denote this irrep by $\boldsymbol{\pi}_{d;N}$. For the rest of this paper, we will only be considering irreps of this type.

Having chosen a set of candidates for $\mathcal{H}_P$, let's now construct a basis for these vector spaces. Fix $N \in \mathbf{N}$. For each $\mathbf{u} \in \mathbf{N}_0^d$ with $\sum_{\xi \in \mathbf{Z}_d} u_\xi = N$, define the multi-set
\begin{equation}
	\mathcal{M}_\mathbf{u} = \big\{ \underbrace{0,\cdots,0}_{u_0}, \cdots, \underbrace{d-1,\cdots,d-1}_{u_{d-1}} \big\}.
\end{equation}
and thereafter the \textit{permutation-invariant vector}
\begin{equation}
	\ket{S_\mathbf{u}} = \sum_{i_1,\cdots,i_N \  \mathrm{s.t.} \  \{i_1,\cdots,i_N\} = \mathcal{M}_\mathbf{u}} \ket{i_1} \otimes \cdots \otimes \ket{i_N}.
\end{equation}
From this, we can construct the set
\begin{equation}
    \left\{ \ket{S_\mathbf{u}} \  \bigg| \  \mathbf{u}\in\mathbf{N}_0^d, \sum_{\xi \in \mathbf{Z}_d} u_\xi = N \right\},
\end{equation}
which is a basis for $\mathrm{Sym}^N(\mathbf{C}^d)$.

This is, however, not the end of the story. Thanks to the way tensor product representations are defined, codes whose range is $\mathrm{Sym}^N(\mathbf{C}^d)$ can transversally implement any logical operation $\mathsf{O} \in \mathrm{SU}(d)$. Unfortunately, the Eastin-Knill theorem \cite{PhysRevLett.102.110502,PhysRevLett.131.240601,PhysRevLett.133.030602} precludes such codes from being error-correcting. In order to obtain a proper subgroup of $\mathrm{Sym}^N(\mathbf{C}^d)$ that remains invariant under some set of logical operations, we must restrict the irreps to a finite subgroup of logical operations that we want to implement --- the Heisenberg-Weyl group.

\subsection{Heisenberg-Weyl group}
Let $d \geq 3$ be an odd integer, $\mathbf{Z}_d$ be the ring of integers modulo $d$, and $\mathbf{Z}_d^\times$ be the multiplicative group of integers modulo $d$. For any vector space $V$, let $\mathrm{GL}(V)$ be the group of invertible linear maps $V \longrightarrow V$.

We define the \textit{Heisenberg-Weyl group}, otherwise known as the qudit Pauli group, to be the set
\begin{equation}
    \mathrm{HW}(d) = \left\{
    \begin{pmatrix}
        1 & a & c\\
        0 & 1 & b\\
        0 & 0 & 1
    \end{pmatrix}
    \bigg| \  a,b,c \in \mathbf{Z}_d
    \right\}
\end{equation}
with the operation matrix multiplication and identity element $\mathsf{diag}(1,1,1)$. Since we want to be able to implement any operation $\mathsf{O} \in \mathrm{HW}(d)$ on the logical qudit, we are interested in the faithful $d$-dimensional irreps of $\mathrm{HW}(d)$, which we now construct. For each $z \in \mathbf{Z}_d^\times$, define $\boldsymbol{\rho}_z \colon \mathrm{HW}(d) \longrightarrow \mathrm{GL}(\mathbf{C}^d)$ as the group homomorphism that maps
\begin{equation}
    \label{HW(d) faithful irreps}
    \begin{pmatrix}
        1 & a & c\\
        0 & 1 & b\\
        0 & 0 & 1
    \end{pmatrix}
    \mapsto
    \zeta_d^{cz} \mathsf{X}_d^a \mathsf{Z}_d^{bz},
\end{equation}
where $\zeta_d = \exp(2\pi i/d)$ and
\begin{align}
    \mathsf{X}_d &= \sum_{j \in \mathbf{Z}_d} \ket{j \oplus 1}\bra{j},\\
    \mathsf{Z}_d &= \sum_{k \in \mathbf{Z}_d} \zeta_d^k \ket{k}\bra{k}.
\end{align}
These are all of the faithful $d$-dimensional irreps of $\mathrm{HW}(d)$ \cite{Grassberger2001,Schulte2004}. Above, the symbol $\oplus$ is used to denote addition modulo $d$. It will be useful to know that the character of $\boldsymbol{\rho}_z$ is the map $\chi_{\boldsymbol{\rho}_z^{}}^{} \colon \mathrm{HW}(d) \longrightarrow \mathbf{C}$ given by
\begin{equation}
    \label{HW(d) characters}
    \begin{pmatrix}
        1 & a & c\\
        0 & 1 & b\\
        0 & 0 & 1
    \end{pmatrix}
    \mapsto
    d\zeta_d^{cz} \delta_{a0}\delta_{b0}.
\end{equation}

Before proceeding, we note that it is necessary to restrict ourselves to odd $d$ because, for any $z \in \mathbf{Z}_d^\times$, the group $\boldsymbol{\rho}_z(\mathrm{HW}(d))$ is not a subgroup of $\mathrm{SU}(d)$. A quick check shows that, whenever $d$ is even, the determinant of $\mathsf{Z}_d$, which is an element of $\boldsymbol{\rho}_z(\mathrm{HW}(d))$ for any $z \in \mathbf{Z}_d^\times$, is $-1$.

\subsection{Branching rules and valid code spaces}
Now that we have identified a set of candidate physical Hilbert spaces from $\mathrm{SU}(d)$ irreps as well as a finite subgroup of logical operations, the next step is to determine which of these candidate physical Hilbert spaces contain valid code spaces.

In our analysis, we consider the embedding $\mathrm{HW}(d) \hookrightarrow \mathrm{SU}(d)$ given by the map $\boldsymbol{\rho}_1$. A \textit{valid code space} is thus defined as a direct sum of subspaces $W \subset \mathrm{Sym}^N(\mathbf{C}^d)$ on which $\boldsymbol{\rho}_1$ may be defined. But why is this the appropriate definition? First things first, we define our code $\mathcal{C}$ in such a way that for all $k \in \mathbf{Z}_d$,
\begin{align}
	\mathsf{\overline{Z}}_d \ket{\overline{0}} &= \ket{\overline{0}},\\
	\mathsf{\overline{X}}_d \ket{\overline{k}} &= \ket{\overline{k \oplus 1}} \label{X action on code words}
\end{align}
where the bars over operators indicate that we are taking the image under the relevant $\mathrm{SU}(d)$ irrep. Now, since $\boldsymbol{\rho}_1$ is the fundamental irrep of $\mathrm{HW}(d)$, it follows that the action of $\mathsf{\overline{X}}_d$ and $\mathsf{\overline{Z}}_d$ on valid code spaces, under our definition, is given by
\begin{align}
	\mathsf{\overline{X}}_d &\colon \ket{\overline{k}} \mapsto \ket{\overline{k\oplus1}},\\
	\qquad \mathsf{\overline{Z}}_d &\colon \ket{\overline{k}} \mapsto \zeta_d^k \ket{\overline{k}},
\end{align}
for all $k \in \mathbf{Z}_d$. Hence, our definition ensures that the corresponding operations $\mathsf{X}_L$ and $\mathsf{Z}_L$ on a logical qudit can be implemented in the way they are typically defined.

A remark is in order. It is worth noting that, although our work primarily adheres to this convention, other embeddings and definitions are technically just as valid. After all, there is nothing fundamental about the phase factor that $\mathsf{Z}_L$ applies to each $\ket{k}_L \in \mathcal{H}_L$. We could, for instance, have defined its action on $\mathcal{H}_L$ to be $\ket{k}_L \mapsto \zeta_d^{-k}\ket{k}_L$ or even $\ket{k}_L \mapsto \zeta_d^{2k}\ket{k}_L$. In such cases, we then define a valid code space as a direct sum of subspaces on which $\boldsymbol{\rho}_{d-1}$ and $\boldsymbol{\rho}_2$ may be defined. We discuss this in greater detail towards the end; for now, we will anchor our analysis in standard definitions.

We can determine the values of $N$ for which the vector space $\mathrm{Sym}^N(\mathbf{C}^d)$ contains a valid code space by studying how $\boldsymbol{\pi}_{d;N}^{}\big|_{\mathrm{HW}(d)}^{}$ (which is in general a reducible representation of $\mathrm{HW}(d)$) decomposes into a direct sum of $\mathrm{HW}(d)$ irreps. The rules for such a decomposition are called \textit{branching rules}. One way to deduce a branching rule is to use the inner product between characters. In particular, the multiplicity of the irrep $\boldsymbol{\rho}_\eta$ (for any $\eta \in \mathbf{Z}_d^\times$) in the decomposition of $\boldsymbol{\pi}_{d;N}^{}\big|_{\mathrm{HW}(d)}^{}$ into a direct sum of irreps is given by \cite{fulton2013representation, FALLBACHER2015229}
\begin{align}
    \label{multiplicity in branching rule}
    \bigg\langle \chi_{\boldsymbol{\rho}_\eta^{}}^{}, \chi&_{\boldsymbol{\pi}_{d;N}^{}\big|_{\mathrm{HW}(d)}^{}}^{} \bigg\rangle \nonumber\\
    &= \sum_{g \in \mathrm{HW}(d)} \chi_{\boldsymbol{\rho}_\eta}^*(g) \chi_{\boldsymbol{\pi}_{d;N}^{}\big|_{\mathrm{HW}(d)}^{}}^{}(g).
\end{align}

We've already calculated $\chi_{\boldsymbol{\rho}_\eta}^{}$ previously in Eq. (\ref{HW(d) characters}), so it remains to evaluate $\chi_{\boldsymbol{\pi}_{d;N}^{}\big|_{\mathrm{HW}(d)}^{}}^{}$. In fact, it suffices to determine the values that $\chi_{\boldsymbol{\pi}_{d;N}^{}\big|_{\mathrm{HW}(d)}^{}}^{}$ take in the centre $Z(\mathrm{HW}(d))$ because $\chi_{\boldsymbol{\rho}_\eta}^{}$ vanishes for all elements $g \in \mathrm{HW}(d) \backslash Z(\mathrm{HW}(d))$. We do this in the following lemma:

\begin{lemma}
    \label{character Heis centre}
    For any $N \in \mathbf{N}$, the character of the irrep $\boldsymbol{\pi}_{d;N}\big|_{\mathrm{HW}(d)}$ takes on the following values in the centre $Z(\mathrm{HW}(d))$:
    \begin{equation}
        \chi_{\boldsymbol{\pi}_{d;N}^{}}^{} \left(\zeta_d^\ell \, \mathsf{I}\right) = \binom{N+d-1}{N} \, \zeta_d^{\ell N}
    \end{equation}
    for all $\ell \in \mathbf{Z}_d$.
\end{lemma}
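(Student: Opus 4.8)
The plan is to reduce the computation to the elementary behaviour of scalar matrices under symmetric powers. First I would pin down the relevant group elements. Under the embedding $\boldsymbol{\rho}_1$, the element $\begin{pmatrix} 1 & a & c \\ 0 & 1 & b \\ 0 & 0 & 1 \end{pmatrix}$ maps to $\zeta_d^c \mathsf{X}_d^a \mathsf{Z}_d^b$, which is a scalar multiple of the identity precisely when $a = b = 0$; these are exactly the elements of $Z(\mathrm{HW}(d))$, and they map to $\zeta_d^\ell \mathsf{I}$ for $\ell \in \mathbf{Z}_d$. Since $\det(\zeta_d^\ell \mathsf{I}) = \zeta_d^{\ell d} = 1$, each such element is a genuine member of $\mathrm{SU}(d)$, so that $\boldsymbol{\pi}_{d;N}$ may legitimately be evaluated on it.

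The key step is to determine how the scalar $\zeta_d^\ell \mathsf{I}$ acts on $\mathrm{Sym}^N(\mathbf{C}^d)$, the space on which $\boldsymbol{\pi}_{d;N}$ is defined. Working in the basis $\{\ket{S_\mathbf{u}}\}$ introduced above, each $\ket{S_\mathbf{u}}$ is a sum of simple tensors $\ket{i_1} \otimes \cdots \otimes \ket{i_N}$ of tensor degree exactly $N$. Since the defining action scales every one of the $N$ tensor factors by $\zeta_d^\ell$, each simple tensor is multiplied by $(\zeta_d^\ell)^N = \zeta_d^{\ell N}$, and hence so is $\ket{S_\mathbf{u}}$. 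Thus $\boldsymbol{\pi}_{d;N}(\zeta_d^\ell \mathsf{I})$ is the scalar operator $\zeta_d^{\ell N} \mathsf{I}$ on $\mathrm{Sym}^N(\mathbf{C}^d)$, so its trace equals $\zeta_d^{\ell N}$ times $\dim \mathrm{Sym}^N(\mathbf{C}^d)$.

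It then remains to count $\dim \mathrm{Sym}^N(\mathbf{C}^d)$, which by the basis above is the number of $\mathbf{u} \in \mathbf{N}_0^d$ satisfying $\sum_{\xi \in \mathbf{Z}_d} u_\xi = N$. A standard stars-and-bars argument gives $\binom{N+d-1}{d-1} = \binom{N+d-1}{N}$, and substituting this into the trace yields the claimed value $\binom{N+d-1}{N}\zeta_d^{\ell N}$.

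I do not expect a substantive obstacle: the statement is a direct consequence of the homogeneity of the symmetric power together with a combinatorial dimension count. The only points deserving care are confirming that $Z(\mathrm{HW}(d))$ maps to honest scalar matrices (rather than only projectively) under $\boldsymbol{\rho}_1$, and noting that the uniform degree-$N$ homogeneity of the basis $\{\ket{S_\mathbf{u}}\}$ is precisely what makes the scalar action clean.
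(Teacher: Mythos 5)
Your proof is correct, and it takes a genuinely different route from the paper's. The paper invokes the Weyl character formula in its complete-symmetric-polynomial form, $\chi_{\boldsymbol{\pi}_{d;N}}(\mathsf{M}) = \sum_{1 \leq i_1 \leq \cdots \leq i_N \leq d} \lambda_{i_1}\cdots\lambda_{i_N}$, then specializes to $\mathsf{M} = \zeta_d^\ell\,\mathsf{I}$, where every monomial equals $\zeta_d^{\ell N}$ and the number of monomials is counted as $\binom{N+d-1}{N}$. You instead bypass the character formula entirely: since the elements of $Z(\mathrm{HW}(d))$ map to honest scalar matrices in $\mathrm{SU}(d)$ (your determinant check $\zeta_d^{\ell d}=1$ is a worthwhile point the paper leaves implicit), and since $\boldsymbol{\pi}_{d;N}$ is the $N$-th symmetric power of the defining representation, the scalar $\zeta_d^\ell$ acts on each of the $N$ tensor factors, so $\boldsymbol{\pi}_{d;N}(\zeta_d^\ell\,\mathsf{I}) = \zeta_d^{\ell N}\,\mathsf{I}$ and the trace is $\zeta_d^{\ell N}\dim\mathrm{Sym}^N(\mathbf{C}^d)$, with the dimension supplied by stars and bars. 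The two computations are of course the same underneath --- evaluating the complete homogeneous symmetric polynomial at equal eigenvalues is exactly your scalar-action observation --- but your version is more elementary and self-contained, needing only the definition of the symmetric-power representation, whereas the paper's version plugs into general machinery that would also handle non-scalar elements (generality that is never needed here, since the characters of $\boldsymbol{\rho}_\eta$ vanish off the centre).
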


\begin{proof}
    We use the Weyl character formula expressed in terms of complete symmetric polynomials \cite{fulton2013representation}. In particular, for the special case of $\boldsymbol{\pi}_{d;N}^{}$, we have for all $\mathsf{M} \in \mathrm{SU}(d)$,
    \begin{equation}
	\chi_{\boldsymbol{\pi}_{d;N}^{}}^{}(\mathsf{M}) = \sum_{1 \leq i_1 \leq \cdots \leq i_N \leq d} \lambda_{i_1} \lambda_{i_2} \cdots \lambda_{i_N}
    \end{equation}
    where $\lambda_1,\cdots,\lambda_d$ are the eigenvalues of $\mathsf{M}$.
	
   Recall that, for any $\ell \in \mathbf{Z}_d$, the eigenvalues of $\zeta_d^\ell \, \mathsf{I}$ are all $\zeta_d^\ell$. By a simple counting argument, it is easy to see that the number of increasing functions $\{1,\cdots,N\} \longrightarrow \{1,\cdots,d\}$ is $\displaystyle\binom{N+d-1}{N}$. The result follows immediately from these observations.
\end{proof}

With the above lemma established, we are now ready to state and prove the branching rules:
\begin{theorem}
    \label{branching rules}
    For any $N \in \mathbf{N}$ with $(N,d) = 1$, the branching rules for $\mathrm{SU}(d) \downarrow \mathrm{HW}(d)$ are
    \begin{equation}
	\boldsymbol{\pi}_{d;N}^{}\big|_{\mathrm{HW}(d)} = \boldsymbol{\rho}_\eta^{\oplus \frac{1}{d}\dim(\mathrm{Sym}^N(\mathbf{C}^d))},
    \end{equation}
    where $\eta \in \mathbf{Z}_d^\times$ is such that $\eta \equiv N \mod d$. (In this particular case, $\boldsymbol{\oplus}$ is used to denote a direct sum.)
\end{theorem}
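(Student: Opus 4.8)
The plan is to compute, for each $\eta\in\mathbf{Z}_d^\times$, the multiplicity of $\boldsymbol{\rho}_\eta$ in $\boldsymbol{\pi}_{d;N}^{}\big|_{\mathrm{HW}(d)}^{}$ directly from the inner-product formula of Eq. (\ref{multiplicity in branching rule}), and then to close the argument with a dimension count. The crucial simplification, already flagged in the text preceding the lemma, is that by Eq. (\ref{HW(d) characters}) the character $\chi_{\boldsymbol{\rho}_\eta}^{}$ is supported entirely on $Z(\mathrm{HW}(d))$, so only the $d$ central elements contribute to the sum over $\mathrm{HW}(d)$. Writing a central element as the preimage under $\boldsymbol{\rho}_1$ of $\mathsf{diag}(1,1,1)$ carrying entry $c\in\mathbf{Z}_d$ in its top-right corner, its image in $\mathrm{SU}(d)$ is $\zeta_d^c\,\mathsf{I}$, so Lemma \ref{character Heis centre} supplies exactly the value $\chi_{\boldsymbol{\pi}_{d;N}^{}}^{}(\zeta_d^c\,\mathsf{I})=\binom{N+d-1}{N}\zeta_d^{cN}$ that I need for the second character in the inner product.

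Substituting the two characters, the multiplicity reduces --- after restoring the conventional $1/|\mathrm{HW}(d)|=1/d^3$ normalisation of the character inner product --- to a quantity proportional to $\binom{N+d-1}{N}\sum_{c\in\mathbf{Z}_d}\zeta_d^{c(N-\eta)}$. The inner sum is the standard character sum over $\mathbf{Z}_d$: it equals $d$ precisely when $N\equiv\eta\pmod d$ and vanishes otherwise. Hence exactly one faithful irrep survives, namely $\boldsymbol{\rho}_\eta$ with $\eta\equiv N\pmod d$, and its multiplicity collapses to $\tfrac{1}{d}\binom{N+d-1}{N}=\tfrac{1}{d}\dim(\mathrm{Sym}^N(\mathbf{C}^d))$, while every other $\boldsymbol{\rho}_z$ with $z\neq\eta$ occurs with multiplicity zero. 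This is precisely where the hypothesis $(N,d)=1$ enters: it guarantees that $\eta:=N\bmod d$ actually lies in $\mathbf{Z}_d^\times$, so that $\boldsymbol{\rho}_\eta$ is a genuine faithful $d$-dimensional irrep and the claimed decomposition is even well-posed.

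The step I expect to be the main obstacle is not this character computation but ruling out the remaining irreps of $\mathrm{HW}(d)$ --- in particular its $d^2$ one-dimensional characters (and, for composite $d$, the non-faithful higher-dimensional irreps) --- since the inner product above only tracks the faithful ones. I would dispose of them by a dimension count: the surviving copies of $\boldsymbol{\rho}_\eta$ already contribute $d\cdot\tfrac{1}{d}\dim(\mathrm{Sym}^N(\mathbf{C}^d))=\dim(\mathrm{Sym}^N(\mathbf{C}^d))$, which exhausts the entire representation space, so every other irrep must occur with multiplicity zero; in passing this forces $d\mid\dim(\mathrm{Sym}^N(\mathbf{C}^d))$, making the exponent a positive integer. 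Equivalently, and perhaps more transparently, one can argue via central characters: $\boldsymbol{\pi}_{d;N}^{}$ sends $\zeta_d^c\,\mathsf{I}$ to the scalar $\zeta_d^{cN}\mathsf{I}$, so the centre acts on the whole restricted space by the single character $c\mapsto\zeta_d^{cN}$, which is nontrivial because $(N,d)=1$; by Schur's lemma every irreducible constituent must carry this same central character, and among all irreps of $\mathrm{HW}(d)$ the only one that does is $\boldsymbol{\rho}_\eta$.
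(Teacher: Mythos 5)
Your proposal is correct and follows essentially the same route as the paper's proof: both compute the multiplicity via the character inner product of Eq.~(\ref{multiplicity in branching rule}), reduce the sum to the centre $Z(\mathrm{HW}(d))$ using Eq.~(\ref{HW(d) characters}), invoke Lemma~\ref{character Heis centre}, and finish with orthogonality of the characters $\zeta_d^{c(N-\eta)}$ over $\mathbf{Z}_d$. The one place you go beyond the paper is in explicitly ruling out the non-faithful irreps (the $d^2$ one-dimensional characters and, for composite $d$, intermediate-dimensional ones) via the dimension count and the central-character/Schur argument --- a step the paper compresses into ``the result then follows directly,'' so your version is, if anything, the more complete one.
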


\begin{proof}
    First note that
    \begin{equation}
	\dim(\mathrm{Sym}^N(\mathbf{C}^d)) = \binom{N+d-1}{N}.
    \end{equation}
    We then use Eq. (\ref{multiplicity in branching rule}) to obtain the multiplicity of the $\mathrm{HW}(d)$ irreps in $\boldsymbol{\pi}_{d;N}\big|_{\mathrm{HW}(d)}$. Fix $\eta \in \mathbf{Z}_d^\times$. Observe that, for any $m \in \mathbf{N}$ and any $N \equiv \eta \mod d$,
    \begin{align}
	\bigg\langle \chi_{\boldsymbol{\rho}_m^{}}^{},& \ \chi_{\boldsymbol{\pi}_{d;N}^{}\big|_{\mathrm{HW}(d)}^{}}^{}\bigg\rangle\nonumber\\
        =& \frac{1}{|\mathrm{HW}(d)|}\sum_{\ell \in \mathbf{Z}_d} \chi_{\boldsymbol{\rho}_m^{}}^*(\zeta_d^\ell \, \mathsf{I}) \cdot \chi_{\boldsymbol{\pi}_{d;N}^{}\big|_{\mathrm{HW}(d)}^{}}^{}(\zeta_d^\ell \, \mathsf{I})\nonumber\\
	=& \frac{1}{d^3}\sum_{\ell \in \mathbf{Z}_d} d\zeta_d^{-\ell m} \cdot \dim(\mathrm{Sym}^N(\mathbf{C}^d)) \zeta_d^{\ell \eta}\nonumber\\
	=& \frac{1}{d}\dim(\mathrm{Sym}^N(\mathbf{C}^d)) \, \delta_{m\eta},
    \end{align}
    where we have used the fact that $\chi_{\boldsymbol{\rho}_m^{}}^{}(g) = 0$ for all $g \in \mathrm{HW}(d) \backslash Z(\mathrm{HW}(d))$. The result then follows directly.
\end{proof}

As a direct consequence of Theorem \ref{branching rules}, we have:
\begin{corollary-theorem}
    \label{valid code spaces}
    If $N \equiv 1 \mod d$, then the vector space $\mathrm{Sym}^N(\mathbf{C}^d)$ contains a valid code space.
\end{corollary-theorem}

\begin{proof}
    By Theorem \ref{branching rules}, $\boldsymbol{\rho}_1$ is in the decomposition of $\boldsymbol{\pi}_{d;N}^{}\big|_{\mathrm{HW}(d)}^{}$ if $N \equiv 1 \mod d$.
\end{proof}

In fact, more can be said beyond Corollary \ref{valid code spaces}. If $N \equiv 1 \mod d$, $\mathrm{Sym}^N(\mathbf{C}^d)$ does not only contain a valid code space --- it \textit{is} a valid code space. This follows from the fact that, if $N \equiv 1 \mod d$, the only $\mathrm{HW}(d)$ irrep in the decomposition of the representation $\boldsymbol{\pi}_{d;N}^{}\big|_{\mathrm{HW}(d)}$ is $\boldsymbol{\rho}_1$. Of course, as noted earlier, we still cannot have $\mathcal{C}(\mathbf{C}^d) = \mathrm{Sym}^N(\mathbf{C}^d)$ if we want $\mathcal{C}$ to be error-correcting. But as Theorem \ref{branching rules} shows, $\mathrm{Sym}^N(\mathbf{C}^d)$ has proper subsets that work as valid code spaces, which is what we aimed to achieve. In fact, we have considerable choice over these subsets since $\mathrm{Sym}^N(\mathbf{C}^d)$ as a whole is a valid code space.

\subsection{Code structure}
To reiterate, the logical Hilbert space in our case is $\mathcal{H}_L = \mathbf{C}^d$. The physical Hilbert space, on the other hand, is taken to be $\mathcal{H}_P = \mathrm{Sym}^N(\mathbf{C}^d)$ for some $N \equiv 1 \mod d$. For the remainder of this article, it is assumed that $N \equiv 1 \mod d$, unless otherwise stated.

In addition, we note that a code $\mathcal{C}$ is uniquely determined by its image on $\{\ket{0}_L\}$ since, by Eq. (\ref{X action on code words}), specifying $\ket{\overline{0}}$ fixes all the other code words. Thus, we will often be defining our codes by specifying the code word $\ket{\overline{0}}$.

\section{Reduced Knill-Laflamme conditions}
A code $\mathcal{C} \colon \mathcal{H}_L \longrightarrow \mathcal{H}_P$ is called a \textit{quantum error-correcting code} (QECC) that corrects a set $\mathcal{E}$ of errors if and only if for all $\ket{\overline{i}},\ket{\overline{j}} \in \mathcal{C}(\mathcal{H}_L)$ and all $\mathsf{E}_a,\mathsf{E}_b \in \mathcal{E}$,
\begin{equation}
    \label{Knill-Laflamme conditions}
    \bra{\overline{i}} \mathsf{E}_a^\dagger \mathsf{E}_b^{}  
    \ket{\overline{j}} = c_{a,b} \braket{\overline{i}|\overline{j}}
\end{equation}
for some set $\left\{c_{a,b}\right\}$ of constants that only depend on $\mathsf{E}_a,\mathsf{E}_b$ \cite{gottesman2024,MikeandIke2010}. This set of equations is also known as the \textit{Knill-Laflamme conditions}. Notably, it suffices to show that these conditions are satisfied on a basis of error operators \cite{gottesman2024,MikeandIke2010}.

With $\mathcal{H}_P$ being a vector space on which a representation of $\mathrm{SU}(d)$ is defined, it would be mathematically elegant to consider the Lie algebra $\mathfrak{su}(d)$ as our error operators --- more precisely, the image of $\mathfrak{su}(d)$ under the relevant representation. Remarkably, this mathematical assumption coincides with what is physically relevant! The elements of $\mathfrak{su}(d)$ correspond to dit flips, phase flips, or combinations of both. Of course, there's also the case of no error occurring, i.e. the error operator is $\mathsf{I}$.

In constructing a basis for $\mathcal{E}$, it would be useful to define the following matrices for all $(j,k) \in \mathbf{Z}_d^2, \ell \in \mathbf{Z}_d$:
\begin{align}
	\mathsf{S}^{(j,k)}_d &= \ket{j}\bra{k} + \ket{k}\bra{j},\\
	\mathsf{A}^{(j,k)}_d &= -i\ket{j}\bra{k} + i\ket{k}\bra{j},\\
	\mathsf{D}^{(\ell)}_d &= \ket{\ell}\bra{\ell} - \ket{\ell\oplus1}\bra{\ell\oplus1},
\end{align}
which are essentially the qudit generalizations of the Pauli error operators on qubits. Physically, $\mathsf{S}^{(j,k)}_d$ are dit flips, $\mathsf{D}^{(\ell)}_d$ are phase flips, and $\mathsf{A}^{(j,k)}_d$ are combinations of dit and phase flips. Having defined these matrices, let us now explicitly construct a basis for $\mathcal{E}$ \cite{pfeifer2003lie}:
\begin{align}
    \mathcal{B}_{\mathcal{E}} = \{\mathsf{I}, \mathsf{\tilde{S}}^{(j,k)}_d,& \mathsf{\tilde{A}}^{(j,k)}_d, \mathsf{\tilde{D}}^{(\ell)}_d \mid\nonumber\\
    &0 \leq j < k \leq d-1, 0 \leq \ell \leq d-2\},
\end{align}
where the tilde indicates that it is the image under the relevant irrep, i.e. $\tilde{\mathsf{G}} = \boldsymbol{\pi}_{d;N}(\mathsf{G})$ for every $\mathsf{G} \in \mathfrak{su}(d)$.

As was the case in Refs. \cite{Gross2021,PhysRevA.108.022424,Herbert2023}, we can leverage the symmetry of the Lie algebra with respect to the finite subgroup to reduce the number of error-correction conditions we need to check. The following lemma establishes the Heisenberg-Weyl symmetry of $\mathfrak{su}(d)$ elements:
\begin{lemma}
    \label{Heisenberg-Weyl symmetry}
    Observe that
    \begin{align}
	\mathsf{\overline{X}}_d^\dagger \tilde{\mathsf{S}}^{(j,k)}_d \mathsf{\overline{X}}_d &= \tilde{\mathsf{S}}^{(j\ominus1,k\ominus1)}_d,\\
	\mathsf{\overline{X}}_d^\dagger \tilde{\mathsf{A}}^{(j,k)}_d \mathsf{\overline{X}}_d &= \tilde{\mathsf{A}}^{(j\ominus1,k\ominus1)}_d,\\
	\mathsf{\overline{X}}_d^\dagger \tilde{\mathsf{D}}_d^{(\ell)} \mathsf{\overline{X}}_d &= \tilde{\mathsf{D}}_d^{(\ell\ominus1)},\\
	\mathsf{\overline{Z}}_d^\dagger \tilde{\mathsf{S}}^{(j,k)}_d \mathsf{\overline{Z}}_d &= \Re\left(\zeta_d^{k-j}\right)\tilde{\mathsf{S}}^{(j,k)}_d - \Im\left(\zeta_d^{k-j}\right)\tilde{\mathsf{A}}^{(j,k)}_d,
    \end{align}
    for all $0 \leq j < k \leq d-1, 0 \leq \ell \leq d-2$. Above, $\ominus$ denotes subtraction modulo $d$.
\end{lemma}

\begin{proof}
    These follow directly from definitions; explicit calculations can be found in the Supplemental Material \cite{supp}.
\end{proof}

Using the Heisenberg-Weyl symmetry of $\mathfrak{su}(d)$, we obtain a reduced set of conditions for a code to be error-correcting:
\begin{theorem}
    A code $\mathcal{C} \colon \mathbf{C}^d \longrightarrow \mathrm{Sym}^N(\mathbf{C}^d)$ is a QECC if it satisfies the following conditions:
    \begin{enumerate}[itemsep=2pt]
        \item[\rm (C1)] $\bra{\overline{i}} \mathsf{\tilde{S}}^{(0,n)}_d \ket{\overline{j}} = 0$ for all $i,j \in \mathbf{Z}_d$ with $i \neq j$ and $n \in \left\{1,\cdots,\frac{d-1}{2}\right\}$
        \item[\rm (C2)] $\bra{\overline{i}} \tilde{\Gamma}_d^{(p,q)} \tilde{\Upsilon}_d^{(r,s)} \ket{\overline{j}} = c_{(\tilde{\Gamma}_d^{(p,q)},\tilde{\Upsilon}_d^{(r,s)})}\delta_{ij}$ for all $i,j \in \mathbf{Z}_d$, $\Gamma,\Upsilon \in \{\mathsf{S},\mathsf{A}\}$, $0 \leq p < q \leq d-1$, and $0 \leq r < s \leq d-1$
        \item[\rm (C3)] $\bra{\overline{k}} \mathsf{\tilde{D}}_d^{(d-2)} \ket{\overline{k}} = c_{(\mathsf{I}, \mathsf{\tilde{D}}^{(d-2)}_d)}$ for all $k \in \mathbf{Z}_d$
        \item[\rm (C4)] $\bra{\overline{k}} \mathsf{\tilde{D}}^{(\ell)}_d \mathsf{\tilde{D}}^{(d-2)}_d \ket{\overline{k}} = c_{(\mathsf{\tilde{D}}^{(\ell)}_d, \mathsf{\tilde{D}}^{(d-2)}_d)}$ for all $k \in \mathbf{Z}_d$ and $\ell \in \{0, \cdots, d-2\}$
    \end{enumerate}
    for some set of constants $\left\{ c_{(\mathsf{E},\mathsf{F})} \mid \mathsf{E},\mathsf{F} \in \mathcal{E} \right\}$.
\end{theorem}

\begin{proof}
    See Supplemental Material \cite{supp}.
\end{proof}

\section{Sparse and doubly permutation-invariant codes}
Owing to the Heisenberg-Weyl symmetry of $\mathfrak{su}(d)$, the Knill-Laflamme conditions have been somewhat reduced --- from $d^6$ quadratic forms to $d^6-2d^5+d^4+d^3/2+d/2$. However, we evidently still have a lot of conditions to satisfy, so devising a procedure for constructing error-correcting codes for a qudit with large $d$, let alone a general odd-dimensional qudit, remains a herculean task. It therefore behooves us to make further simplifications. Abstracting the ideas in Ref. \cite{Herbert2023} to a general form that works for any odd-dimensional qudit, we consider two types of codes and show how these further reduce the Knill-Laflamme conditions to just three quadratic forms.

\subsection{Sparse codes}
Firstly, we consider sparse codes, which naturally protect against most errors involving dit flips. Informally, these codes map logical qudit states to orthogonal subspaces of $\mathcal{H}_P$ that are sufficiently ``far'' apart such that any sequence of at most 2 distinct dit flips will not send a code word to a subspace containing a code word. To get an intuition for what would be a good metric and minimum distance to use in our definition, see Lemma S3 in the Supplemental Material \cite{supp}.

For each $d$, define the following sets for all $N \equiv 1 \mod d$ and all $\nu \in \mathbf{Z}_d$:
\begin{equation}
    \mathcal{U}_{d,N,\nu} = \bigg\{\mathbf{u} \in \mathbf{N}_0^d \  \bigg| \sum_{\xi \in \mathbf{Z}_d} u_\xi = N \ , \  \bigoplus_{\xi \in \mathbf{Z}_d} \xi u_\xi = \nu \bigg\}.
\end{equation}
Suppose that $\mathcal{C} \colon \mathbf{C}^d \longrightarrow \mathrm{Sym}^N(\mathbf{C}^d)$ is a code with
\begin{equation}
    \ket{\overline{0}} = \sum_{\mathbf{u} \in \mathcal{Z}_{d,N,0}} \alpha_{\mathbf{u}} \ket{S_{\mathbf{u}}},
\end{equation}
for some set of constants $\left\{ \alpha_{\mathbf{u}} \mid \mathbf{u} \in \mathcal{Z}_{d,N,0}\right\}$. We say that $\mathcal{C}$ is \textit{sparse} if
\begin{equation}
    \sum_{\xi \in \mathbf{Z}_d} |u_\xi - u'_{\xi+\delta}| > 4
\end{equation}
for all $\mathbf{u},\mathbf{u}' \in \mathcal{Z}_{d,N,0}$ with $\alpha_{\mathbf{u}}, \alpha_{\mathbf{u}'} \neq 0$ and all $\delta \in \mathbf{Z}_d$.

Having formally defined sparse codes, we can now be more precise about what we mean by the natural protection they offer against most dit flip errors.
\begin{theorem}
    If a code $\mathcal{C} \colon \mathbf{C}^d \longrightarrow \mathrm{Sym}^N(\mathbf{C}^d)$ is sparse, then $\mathcal{C}$ satisfies Condition C1 and Condition C2 with $(p,q) \neq (r,s)$.
\end{theorem}

\begin{proof}
    See Supplemental Material \cite{supp}.
\end{proof}

We note that a sparse code does not necessarily satisfy Condition 2 with $(p,q)=(r,s)$. To see why, observe that the image of $\ket{S_\mathbf{u}} \in \mathrm{Sym}^N(\mathbf{C}^d)$ under two identical dit flips could have a non-zero component along $\ket{S_\mathbf{u}}$. Thus, given $0 \leq p < q \leq d-1$, the inner products $\bra{\overline{k}} \tilde{\Gamma}_d^{(p,q)} \tilde{\Upsilon}_d^{(p,q)} \ket{\overline{k}}$ might not vanish for all $k \in \mathbf{Z}_d$, in which case we really have to make sure that they are all equal to each other. However, the sparsity of a code does not guarantee this.

\subsection{Doubly permutation-invariant codes}
Next, we consider doubly permutation-invariant codes, which are meant to deal with the phase flip errors. To get a sense of such codes, consider the action of a phase flip on $\ket{S_\mathbf{u}} \in \mathrm{Sym}^N(\mathbf{C}^d)$:
\begin{equation}
	\mathsf{\tilde{D}}_d^{(\ell)} \ket{S_\mathbf{u}} = (u_\ell - u_{\ell+1})\ket{S_\mathbf{u}},
\end{equation}
which shows that $\ket{S_\mathbf{u}}$ is an eigenvector of $\mathsf{\tilde{D}}_d^{(\ell)}$ with an eigenvalue that is antisymmetric with respect to $u_\ell$ and $u_{\ell+1}$. Roughly speaking, it is therefore reasonable to conjecture that the invariance of a code word under permutations of $u_1,\cdots,u_{d-1}$ would lead to the vanishing of a lot of the inner products in Conditions C3 and C4.

Let us now be formal about our description of doubly permutation-invariant codes. To do this, we first define the following sets:
\begin{align}
    \mathcal{V}_{d,N} = \bigg\{\mathbf{v} \in \mathbf{N}_0^d \  \bigg| \ v_1 \equiv \cdots& \equiv v_{d-1} \  \mathrm{mod} \  d,\nonumber\\
    & \sum_{\xi \in \mathbf{Z}_d} v_\xi = N\bigg\}\\
    \mathcal{W}_{d,N} = \mathcal{V}_{d,N} \cap \bigg\{\mathbf{w} \in \mathbf{N}_0^d \  \bigg| \ &\bigoplus_{\xi \in \mathbf{Z}_d} \xi w_\xi = 0 \bigg\}
\end{align}

For each $d$, define the relation $\mathcal{R}$ on the set $\mathcal{V}_{d,N}$ by $(v_0,v_1,\cdots,v_{d-1}) \  \mathcal{R} \  (v'_0,v'_1,\cdots,v'_{d-1})$ if and only if $(v'_0,v'_1,\cdots,v'_{d-1}) = (v_0,v_{\sigma(1)},\cdots,v_{\sigma(d-1)})$ for some $\sigma \in S_{d-1}$. A quick inspection shows that $\mathcal{R}$ is an equivalence relation. Note that restricting to $\mathcal{W}_{d,N} \subset \mathcal{V}_{d,N}$ gives us an equivalence relation $\hat{\mathcal{R}}$ on $\mathcal{W}_{d,N}$.

For each $\mathbf{v} \in \mathcal{V}_{d,N}$, we can define a \textit{doubly permutation-invariant vector} by
\begin{equation}
    \ket{\ket{S_\mathbf{v}}} = \sum_{\mathbf{v}' \in [\mathbf{v}]_{\mathcal{R}}^{}} \ket{S_{\mathbf{v}'}}.
\end{equation}
We then define a \textit{doubly symmetric vector space} as
\begin{equation}
    \mathrm{DSym}^N(\mathbf{C}^d) = \mathrm{span}_\mathbf{C}\left\{ \ket{\ket{S_\mathbf{v}}} \mid \mathbf{v} \in \mathcal{V}_{d,N} \right\}.
\end{equation}

Now, we say that a code is a \textit{doubly permutation-invariant code} if it maps to a doubly symmetric physical Hilbert space, i.e.
\begin{equation}
    \mathcal{H}_P = \mathrm{DSym}^N(\mathbf{C}^d).
\end{equation}
Equivalently, a code is doubly permutation-invariant if the code word for $\ket{0}_L$ is of the form
\begin{equation}
    \label{DPI code word}
    \ket{\overline{0}} = \sum_{\mathbf{w} \in f(\mathcal{W}_{d,N}/\hat{\mathcal{R}})} \alpha_{\mathbf{w}} \ket{\ket{S_\mathbf{w}}}
\end{equation}
for some set of constants $\left\{ \alpha_{\mathbf{w}} \mid \mathbf{w} \in f(\mathcal{W}_{d,N}/\hat{\mathcal{R}}) \right\}$ and some injection $f \colon \mathcal{W}_{d,N}/\hat{\mathcal{R}} \longrightarrow \mathcal{W}_{d,N}$ that maps each equivalence class to a chosen representative.

We can use the double permutation invariance of a code to simplify Conditions C3 and C4:
\begin{theorem}
	\label{reduced KL conditions for DPI codes}
    Let $\mathcal{C} \colon \mathbf{C}^d \longrightarrow \mathrm{DSym}^N(\mathbf{C}^d)$ be a doubly permutation-invariant code. If $\mathcal{C}$ satisfies
    \begin{enumerate}[\rm (QF1),leftmargin=3em]
        \item $\bra{\overline{d-1}} \mathsf{\tilde{D}}_d^{(d-2)} \ket{\overline{d-1}} = 0$
        \item $\bra{\overline{0}} \mathsf{\tilde{D}}_d^{(d-2)}\mathsf{\tilde{D}}_d^{(d-2)} \ket{\overline{0}} = \bra{\overline{d-1}} \mathsf{\tilde{D}}_d^{(d-2)}\mathsf{\tilde{D}}_d^{(d-2)} \ket{\overline{d-1}}$
    \end{enumerate}
    then it satisfies Condition C3 and Condition C4.
\end{theorem}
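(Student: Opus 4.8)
The plan is to use the Heisenberg-Weyl symmetry of Lemma~\ref{Heisenberg-Weyl symmetry} to collapse every condition at an arbitrary $k$ onto the single code word $\ket{\overline{0}}$, and then to evaluate the resulting expectation values using the fact that each $\ket{S_\mathbf{v}}$ is a simultaneous eigenvector of all the $\mathsf{\tilde{D}}_d^{(\ell)}$. Since $\ket{\overline{k}} = \mathsf{\overline{X}}_d^k\ket{\overline{0}}$ and $\mathsf{\overline{X}}_d^\dagger \mathsf{\tilde{D}}_d^{(\ell)}\mathsf{\overline{X}}_d = \mathsf{\tilde{D}}_d^{(\ell\ominus1)}$, conjugating $k$ times gives $\bra{\overline{k}}\mathsf{\tilde{D}}_d^{(\ell)}\ket{\overline{k}} = \bra{\overline{0}}\mathsf{\tilde{D}}_d^{(\ell\ominus k)}\ket{\overline{0}}$ and $\bra{\overline{k}}\mathsf{\tilde{D}}_d^{(\ell)}\mathsf{\tilde{D}}_d^{(d-2)}\ket{\overline{k}} = \bra{\overline{0}}\mathsf{\tilde{D}}_d^{(\ell\ominus k)}\mathsf{\tilde{D}}_d^{((d-2)\ominus k)}\ket{\overline{0}}$. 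Writing $g(a,b) = \bra{\overline{0}}\mathsf{\tilde{D}}_d^{(a)}\mathsf{\tilde{D}}_d^{(b)}\ket{\overline{0}}$, Condition C3 becomes the requirement that $\bra{\overline{0}}\mathsf{\tilde{D}}_d^{(m)}\ket{\overline{0}}$ be independent of $m\in\mathbf{Z}_d$, while Condition C4 for a fixed $\ell$ becomes the requirement that $g(a,b)$ be constant along the diagonal $a\ominus b = \ell\oplus 2$; as $\ell$ runs over $\{0,\dots,d-2\}$ these diagonals are exactly $a\ominus b \in \{0,2,3,\dots,d-1\}$.

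Next I would expand in the basis $\ket{\overline{0}} = \sum_\mathbf{v} c_\mathbf{v}\ket{S_\mathbf{v}}$. Because the $\ket{S_\mathbf{v}}$ are orthogonal and $\mathsf{\tilde{D}}_d^{(\ell)}\ket{S_\mathbf{v}} = (v_\ell - v_{\ell\oplus1})\ket{S_\mathbf{v}}$, this yields $g(a,b) = \sum_\mathbf{v} p_\mathbf{v}\,(v_a - v_{a\oplus1})(v_b - v_{b\oplus1})$ with nonnegative weights $p_\mathbf{v} = |c_\mathbf{v}|^2\braket{S_\mathbf{v}|S_\mathbf{v}}$, and an analogous single-factor expression for the C3 quantities. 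The key structural input is that double permutation invariance forces both $c_\mathbf{v}$ and $\braket{S_\mathbf{v}|S_\mathbf{v}}$ to be constant on each $\mathcal{R}$-equivalence class, so $p_\mathbf{v}$ is invariant under all permutations of $v_1,\dots,v_{d-1}$. This lets me replace the observable $(v_a - v_{a\oplus1})(v_b - v_{b\oplus1})$ by its average over $S_{d-1}$ inside the sum without changing $g(a,b)$, and likewise for the C3 observable.

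I would then compute these symmetrized observables. The $S_{d-1}$-average of a monomial $v_x v_y$ depends only on whether each index equals the fixed position $0$ and on whether $x = y$, so each of the four terms collapses to one of a handful of symmetric moments. For C3, the averaged $v_m - v_{m\oplus1}$ vanishes for $m\in\{1,\dots,d-2\}$ and equals a common quantity for $m=0$ and $m=d-1$ with opposite signs; since the reduction identifies QF1 with $\bra{\overline{0}}\mathsf{\tilde{D}}_d^{(d-1)}\ket{\overline{0}} = 0$, this common quantity vanishes and C3 holds with constant $0$. For C4, the diagonals $a\ominus b\in\{2,\dots,d-2\}$ give $g\equiv 0$ identically after the telescoping cancellation, so nothing more is needed there. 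This leaves only $a\ominus b = 0$ (i.e.\ $\ell = d-2$) and $a\ominus b = d-1$ (i.e.\ $\ell = d-3$); on each, the symmetrized value is the same for all positions except the one slot where the index $0$ lands in the interior of the product, producing one exceptional value. A direct evaluation shows that equating the generic and exceptional values is, on both diagonals, one and the same linear relation among the second moments, and that relation is precisely QF2, written at $k=0$ as $\bra{\overline{0}}\mathsf{\tilde{D}}_d^{(d-2)}\mathsf{\tilde{D}}_d^{(d-2)}\ket{\overline{0}} = \bra{\overline{d-1}}\mathsf{\tilde{D}}_d^{(d-2)}\mathsf{\tilde{D}}_d^{(d-2)}\ket{\overline{d-1}}$.

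I expect the main obstacle to be the bookkeeping in this last step: correctly tabulating the four averaged brackets across the boundary index $0$ for each index pattern and checking that the exceptional slots on the $a = b$ and $b = a\oplus1$ diagonals collapse to the single condition QF2 rather than to two independent constraints. I would handle this by listing the averaged value of $(v_a - v_{a\oplus1})(v_b - v_{b\oplus1})$ for the few distinct patterns (no index equal to $0$, and exactly one index equal to $0$ in each of the four slots) and tracking the resulting cancellations, which is routine once the symmetrization reduction is in place.
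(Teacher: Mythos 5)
Your proof is correct and proceeds by exactly the route the paper's machinery is built for: conjugation by $\mathsf{\overline{X}}_d$ (Lemma \ref{Heisenberg-Weyl symmetry}) to pull every condition back to expectation values in $\ket{\overline{0}}$, expansion in the orthogonal $\mathsf{\tilde{D}}_d^{(\ell)}$-eigenbasis $\{\ket{S_\mathbf{v}}\}$, and $S_{d-1}$-symmetrization of the observables using the permutation invariance of the weights $|c_\mathbf{v}|^2\braket{S_\mathbf{v}|S_\mathbf{v}}$ guaranteed by double permutation invariance. I verified the bookkeeping you deferred: the diagonals $a\ominus b\in\{2,\dots,d-2\}$ vanish identically after symmetrization, the two exceptional diagonals ($\ell=d-2$ and $\ell=d-3$) yield second-moment relations that are negatives of each other and hence each equivalent to QF2, and QF1 forces the C3 constant to be $0$, so the argument closes exactly as you describe.
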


\begin{proof}
    See Supplemental Material \cite{supp}.
\end{proof}

\subsection{SDPI codes}
Now, we combine both types and introduce the notion of a \textit{sparse and doubly permutation-invariant (SDPI) code}. By itself, the sparsity of a code does not guarantee that the code would satisfy Condition C2 with $(p,q)=(r,s)$. However, when this is combined with double permutation invariance, the condition simplifies into just one quadratic form:
\begin{theorem}
	Let $\mathcal{C} \colon \mathbf{C}^d \longrightarrow \mathrm{DSym}^N(\mathbf{C}^d)$ be an SDPI code. If $\mathcal{C}$ satisfies
    \begin{enumerate}[\rm (QF3),leftmargin=3em]
        \item $\bra{\overline{0}} \mathsf{\tilde{S}}_d^{(0,1)}\mathsf{\tilde{S}}_d^{(0,1)} \ket{\overline{0}} = \bra{\overline{d-1}} \mathsf{\tilde{S}}_d^{(0,1)}\mathsf{\tilde{S}}_d^{(0,1)} \ket{\overline{d-1}}$
    \end{enumerate}
    then it satisfies Condition C2 with $(p,q) = (r,s)$.
\end{theorem}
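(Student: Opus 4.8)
The plan is to split Condition~C2 with $(p,q)=(r,s)$ into its off-diagonal part ($i\neq j$) and its diagonal part ($i=j$), the latter being the genuinely new content (two \emph{identical} dit flips) flagged in the remark above. For the off-diagonal part I would reuse the mechanism of the sparse theorem: expanding $\tilde{\Gamma}_d^{(p,q)}\tilde{\Upsilon}_d^{(p,q)}$ in ladder operators on levels $p$ and $q$ shows that a product of two dit flips on the same pair sends each $\ket{S_\mathbf{u}}$ to a combination of $\ket{S_\mathbf{u}}$ and $\ket{S_{\mathbf{u}\pm 2(\mathbf{e}_p-\mathbf{e}_q)}}$, hence displaces occupation vectors by at most $4$ in the sparsity metric. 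Because distinct code words are the cyclic shifts $\ket{\overline{i}}=\mathsf{\overline{X}}_d^{\,i}\ket{\overline{0}}$, whose supports are more than $4$ apart, no surviving component can connect $\ket{\overline{i}}$ to $\ket{\overline{j}}$ for $i\neq j$, so these matrix elements vanish.

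For the diagonal part I would first fold every $k$ onto $k=0$ using the Heisenberg-Weyl symmetry of Lemma~\ref{Heisenberg-Weyl symmetry}: conjugating by $\mathsf{\overline{X}}_d$ and using $\mathsf{\overline{X}}_d^\dagger\ket{\overline{k}}=\ket{\overline{k\ominus1}}$ yields $\bra{\overline{k}}\tilde{\Gamma}_d^{(p,q)}\tilde{\Upsilon}_d^{(p,q)}\ket{\overline{k}}=\bra{\overline{0}}\tilde{\Gamma}_d^{(p\ominus k,\,q\ominus k)}\tilde{\Upsilon}_d^{(p\ominus k,\,q\ominus k)}\ket{\overline{0}}$, so it suffices to prove that this expectation is constant as $(a,b)=(p\ominus k,q\ominus k)$ runs over the translation orbit of the fixed pair. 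Introducing the collective number operators $\mathsf{\tilde{n}}_c:=\boldsymbol{\pi}_{d;N}(\ket{c}\bra{c})$, which act as $\mathsf{\tilde{n}}_c\ket{S_\mathbf{u}}=u_c\ket{S_\mathbf{u}}$, I would expand each product and invoke sparsity to discard the occupation-changing pieces; what remains is $\bra{\overline{0}}(2\,\mathsf{\tilde{n}}_a\mathsf{\tilde{n}}_b+\mathsf{\tilde{n}}_a+\mathsf{\tilde{n}}_b)\ket{\overline{0}}$ for $\Gamma=\Upsilon$ (the same for $\mathsf{S}\mathsf{S}$ and $\mathsf{A}\mathsf{A}$) and $\pm i\,\bra{\overline{0}}(\mathsf{\tilde{n}}_a-\mathsf{\tilde{n}}_b)\ket{\overline{0}}$ for $\Gamma\neq\Upsilon$.

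Double permutation invariance then collapses these to a few scalars. As $\ket{\overline{0}}$ is fixed, up to an irrelevant phase, by the symmetric lift of any permutation of the levels $1,\dots,d-1$, the one-point values $\bra{\overline{0}}\mathsf{\tilde{n}}_c\ket{\overline{0}}$ equal a common $\mu$ for $c\neq0$ and some $\mu_0$ for $c=0$, and the two-point values $\bra{\overline{0}}\mathsf{\tilde{n}}_a\mathsf{\tilde{n}}_b\ket{\overline{0}}$ equal $C$ when $a,b\neq0$ and $C_0$ when one index is $0$. For $\Gamma=\Upsilon$ the orbit of any pair meets the level-$0$-free configuration for all but two values of $k$ and the level-$0$-touching one for exactly those two, so orbit-constancy is equivalent to the single equation $2C+2\mu=2C_0+\mu_0+\mu$; rewriting $\ket{\overline{d-1}}$ through the same shift identifies this as precisely QF3. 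For $\Gamma\neq\Upsilon$ a transposition in $S_{d-1}$ maps $\bra{\overline{0}}\mathsf{\tilde{S}}_d^{(a,b)}\mathsf{\tilde{A}}_d^{(a,b)}\ket{\overline{0}}$ to its negative whenever $a,b\neq0$, forcing those terms to vanish, and the residual requirement is the occupation balance $\mu_0=\mu$ --- which is exactly QF1 of Theorem~\ref{reduced KL conditions for DPI codes} and hence already in force for a code meeting the full reduced condition set.

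I expect the main obstacle to be precisely the asymmetry that double permutation invariance introduces at level $0$: the symmetry ties all level-$0$-free matrix elements together but leaves the level-$0$-touching ones free, so the diagonal conditions cannot be closed by symmetry alone. The decisive step is to show that, after sparsity and permutation invariance have been applied, every surviving diagonal condition degenerates to a single comparison between a level-$0$-touching scalar and a level-$0$-free scalar, and that QF3 furnishes exactly that bridge for the symmetric products. I would take care over the index reorderings produced by the modular shifts, which flip the sign of $\mathsf{\tilde{A}}_d^{(a,b)}$ but not of $\mathsf{\tilde{S}}_d^{(a,b)}$, since an unnoticed sign there is the likeliest way for the reduction to QF3 to fail.
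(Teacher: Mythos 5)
Your mechanism for the off-diagonal matrix elements and for the $\Gamma=\Upsilon$ diagonal ones is correct, and the algebra checks out: the occupation-preserving part of both $\tilde{\mathsf{S}}^{(a,b)}_d\tilde{\mathsf{S}}^{(a,b)}_d$ and $\tilde{\mathsf{A}}^{(a,b)}_d\tilde{\mathsf{A}}^{(a,b)}_d$ is $2\tilde{\mathsf{n}}_a\tilde{\mathsf{n}}_b+\tilde{\mathsf{n}}_a+\tilde{\mathsf{n}}_b$, sparsity (distance $>4$) kills every occupation-changing piece and every $i\neq j$ element, the translation orbit of a pair touches level $0$ exactly twice, and orbit-constancy collapses to the single equation $2C+2\mu=2C_0+\mu_0+\mu$, which is QF3 once $\bra{\overline{d-1}}\cdots\ket{\overline{d-1}}$ is folded back to the $(1,2)$-pair expectation on $\ket{\overline{0}}$. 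This is the natural route to the result.

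The genuine gap is your treatment of the mixed products $\tilde{\mathsf{S}}^{(p,q)}_d\tilde{\mathsf{A}}^{(p,q)}_d$ and $\tilde{\mathsf{A}}^{(p,q)}_d\tilde{\mathsf{S}}^{(p,q)}_d$, which do belong to Condition C2 with $(p,q)=(r,s)$ since $\Gamma$ and $\Upsilon$ range independently over $\{\mathsf{S},\mathsf{A}\}$. You correctly reduce their diagonal conditions to $\mu_0=\mu$, i.e.\ to QF1, but then you declare QF1 to be ``already in force'' --- it is not: the theorem you are proving assumes only that the code is SDPI and satisfies QF3, so invoking QF1 proves a different, weaker implication. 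Nor can this step be repaired, because QF3 does not imply QF1 for SDPI codes. Take $d=3$, $N=13$, and
\begin{equation*}
\ket{\overline{0}} = \alpha\ket{\ket{S_{(13,0,0)}}}+\beta\ket{\ket{S_{(0,5,8)}}},
\end{equation*}
where both orbits lie in $\mathcal{W}_{3,13}$; this code is doubly permutation-invariant by construction and sparse (every shifted $\ell_1$ distance between support vectors is at least $6$). Writing $x=|\alpha|^2$ and $y=\binom{13}{5}|\beta|^2$, one finds $\mu_0=13x$, $\mu=13y$, $C_0=0$, $C=80y$, with no occupation-changing contribution surviving, so QF3 reads $13x+13y=186y$, i.e.\ $13x=173y$, which admits positive solutions; for any such solution $\mu_0-\mu=160y\neq0$, and hence
\begin{equation*}
\bra{\overline{0}}\tilde{\mathsf{S}}^{(0,1)}_d\tilde{\mathsf{A}}^{(0,1)}_d\ket{\overline{0}} = i\left(\mu_0-\mu\right)\neq 0 = \bra{\overline{2}}\tilde{\mathsf{S}}^{(0,1)}_d\tilde{\mathsf{A}}^{(0,1)}_d\ket{\overline{2}},
\end{equation*}
so Condition C2 with $(p,q)=(r,s)=(0,1)$ fails for this QF3-satisfying SDPI code. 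Consequently the statement is only true if either the mixed products are excluded (restricting C2 here to $\Gamma=\Upsilon$) or QF1 is added to the hypotheses; the latter is exactly what your argument uses, and it is harmless for Corollary \ref{fully reduced KL conditions}, where QF1--QF3 are assumed jointly, but it must be stated as an assumption rather than imported silently.
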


\begin{proof}
	 See Supplemental Material \cite{supp}.
\end{proof}

We now arrive at one of the main results of this paper. Corollary \ref{fully reduced KL conditions} shows that, for SDPI codes, the $d^6$ quadratic forms that make up the Knill-Laflamme conditions reduce to just three!

\begin{corollary-theorem}
	\label{fully reduced KL conditions}
	If an SDPI code $\mathcal{C} \colon \mathbf{C}^d \longrightarrow \mathrm{DSym}^N(\mathbf{C}^d)$ satisfies the quadratic forms QF1 to QF3, then $\mathcal{C}$ is a QECC.
\end{corollary-theorem}

\begin{proof}
	This follows trivially from Theorems 7 and 8.
\end{proof}

\subsection{Procedure}
Now that we have established Corollary \ref{fully reduced KL conditions}, we can apply this knowledge to devise a procedure for constructing error-correcting qudit codes that work for any odd integer $d \geq 3$. The procedure is as follows:

\begin{enumerate}
	\item Choose a dimension $d$ for the logical qudit and the number of physical qudits $N$, noting that it must satisfy $N \equiv 1 \mod d$.
	\item Pick a subset $\mathcal{S} \subset f(\mathcal{W}_{d,N}/\hat{\mathcal{R}})$ such that $\mathcal{C}$ is sparse if
		\begin{equation}
  			\alpha_\mathbf{s} \neq 0, \qquad \alpha_\mathbf{t} = 0
		\end{equation}
	for all $\mathbf{s} \in \mathcal{S}$ and $\mathbf{t} \in f(\mathcal{W}_{d,N}/\hat{\mathcal{R}}) \backslash \mathcal{S}$.
	\item Define the code $\mathcal{C} \colon \mathbf{C}^d \longrightarrow \mathrm{DSym}^N(\mathbf{C}^d)$ by
	\begin{equation}
		\ket{\overline{0}} = \sum_{\mathbf{s} \in \mathcal{S}} \alpha_\mathbf{s} \ket{\ket{S_\mathbf{s}}},
	\end{equation}
	with $\alpha_\mathbf{s}$ arbitrary. Recall from Sec. II D that this suffices in defining $\mathcal{C}$ since the other code words are given by $\ket{\overline{k}} = \mathsf{\overline{X}}_d^k \ket{\overline{0}}$, for any $k \in \mathbf{Z}_d$.
	\item Evaluate the inner products in QF1 to QF3 for the code defined above, and solve the system of quadratic forms for non-trivial solution(s) $\{\alpha_\mathbf{s} \mid \mathbf{s} \in \mathcal{S}\}$.
	\item If a non-trivial solution does not exist, return to Step 2 and choose another subset $\mathcal{S}$, or return to Step 1 and choose a larger $N$.
\end{enumerate}

When picking a subset $\mathcal{S}$ in Step 2, note that for a code to be error-correcting, there necessarily exist $\mathbf{s}', \mathbf{s}'' \in \mathcal{S}$ such that $s'_i > s'_0$ for some $i \in \{1,\cdots,d-1\}$ and $s''_j < s''_0$ for some $j \in \{1,\cdots,d-1\}$. To see why, observe that QF1 gives us
\begin{equation}
    \label{concrete QF1}
    \sum_{\mathbf{s} \in \mathcal{S}} \beta_\mathbf{s} \braket{S_\mathbf{s}|S_\mathbf{s}} \left|\left| \alpha_\mathbf{s} \right|\right|^2 = 0
\end{equation}
for some set of coefficients $\{\beta_\mathbf{s} \mid \mathbf{s} \in \mathcal{S}\}$. If the aforementioned necessary condition is not satisfied, then $\beta_\mathbf{s}>0$ or $\beta_\mathbf{s}<0$ for all $\mathbf{s} \in \mathcal{S}$. This then implies that there is no non-trivial solution to Eq. (\ref{concrete QF1}).

\section{Some error-correcting SDPI codes}
With the procedure established, we can now employ it to construct concrete examples of error-correcting SDPI codes.

\subsection{Codes for qutrits}
An example of an error-correcting SDPI code for qutrits is the map $\mathcal{C}_1 \colon \mathbf{C}^3 \longrightarrow \mathrm{DSym}^{13}(\mathbf{C}^3)$ defined by
\begin{align}
    \ket{\overline{0}} = \frac{1}{9}\sqrt{\frac{41}{5}}& \ket{\ket{S_{(13,0,0)}}} + \frac{1}{9\sqrt{55}} \ket{\ket{S_{(4,0,9)}}}\nonumber\\
    &+ \frac{1}{18\sqrt{385}} \ket{\ket{S_{(3,5,5)}}}.
    \label{qutrit code example}
\end{align}
We now show how we arrive at this through our procedure. For $N=13$, observe that we can pick the subset $\mathcal{S} = \{(13,0,0), (4,0,9), (3,5,5)\}$. We then obtain the following system of linear equations in $\xi_1,\xi_2,\xi_3$ (or, equivalently, quadratic equations in $\alpha_{(13,0,0)}, \alpha_{(4,0,9)}, \alpha_{(3,5,5)}$):
\begin{align}
	13\xi_1 - \xi_2- 2\xi_3 &= 0,\\
	169\xi_1 - 121\xi_2 + 4\xi_3 &= 0,\\
	13\xi_1 + 71\xi_2 - 22\xi_3 &= 0,
\end{align}
where $\xi_1$,$\xi_2$,$\xi_3$ are defined as
\begin{align}
	\xi_1 &= \braket{S_{(13,0,0)}|S_{(13,0,0)}}||\alpha_{(13,0,0)}||^2\\
	\xi_2 &= \braket{S_{(4,0,9)}|S_{(4,0,9)}}||\alpha_{(4,0,9)}||^2\\
	\xi_3 &= \braket{S_{(3,5,5)}|S_{(3,5,5)}}||\alpha_{(3,5,5)}||^2
\end{align}
Solving the linear system using \textsc{Mathematica} or elementary linear-algebraic methods yields the code in Eq. (\ref{qutrit code example}).

\subsection{Codes for 5-level qudits and beyond}
For $d=5$, an example is the code $\mathcal{C}_2 \colon \mathbf{C}^5 \longrightarrow \mathrm{DSym}^{16}(\mathbf{C}^5)$ defined by
\begin{align}
    \ket{\overline{0}} = \frac{1}{5\sqrt{5}}& \ket{\ket{S_{(16,0,0,0,0)}}} + \frac{1}{5}\sqrt{\frac{2}{5005}} \ket{\ket{S_{(6,10,0,0,0)}}}\nonumber\\
    &+ \frac{1}{175\sqrt{4290}} \ket{\ket{S_{(0,4,4,4,4)}}}.
\end{align}
Similarly, for $d=7$, we have the code $\mathcal{C}_3 \colon \mathbf{C}^7 \longrightarrow \mathrm{DSym}^{36}(\mathbf{C}^7)$ defined by
\begin{align}
    \ket{\overline{0}} = \frac{1}{7}&\sqrt{\frac{13}{7}} \ket{\ket{S_{(36,0,0,0,0,0,0)}}} \nonumber\\
    &+ \frac{1}{14}\sqrt{\frac{3}{5883955}} \ket{\ket{S_{(8,28,0,0,0,0,0)}}}\nonumber\\
    &+ \frac{1}{300179880\sqrt{60500902}} \ket{\ket{S_{(0,6,6,6,6,6,6)}}}
\end{align}

In fact, the form of the code words given for $d=5$ and $d=7$ can be generalized. Theorem \ref{general QECC} below shows that we can encode a logical qudit of any odd dimension $d \geq 5$ into $(d-1)^2$ physical qudits.
\begin{theorem}
	\label{general QECC}
    Let $\mathbf{a} = ((d-1)^2,\underbrace{0,\cdots,0}_{d-1})$, $\mathbf{b} = (d+1,d(d-3),\underbrace{0,\cdots,0}_{d-2})$, $\mathbf{c} = (0,\underbrace{d-1,\cdots,d-1}_{d-1})$. For each odd integer $d \geq 5$, there exists an error-correcting SDPI code with the logical code word
    \begin{equation}
        \label{general code word}
        \ket{\overline{0}} = \alpha_\mathbf{a}\ket{\ket{S_\mathbf{a}}} + \alpha_\mathbf{b}\ket{\ket{S_\mathbf{b}}} + \alpha_\mathbf{c}\ket{\ket{S_\mathbf{c}}}
    \end{equation}
    for some coefficients $\alpha_\mathbf{a}$, $\alpha_\mathbf{b}$, $\alpha_\mathbf{c}$. In particular, these coefficients are given by
    \begin{align}
        \alpha_\mathbf{a} &= \sqrt{\frac{d^3 - 5d^2 + d - 1}{2d^4 - 6d^3}},\\
        \alpha_\mathbf{b} &= \sqrt{\binom{(d-1)^2}{\mathbf{b}}^{-1} \frac{(d-1)(1-d\alpha_\mathbf{a}^2)}{d^2+d}},\\
        \alpha_\mathbf{c} &= \sqrt{\binom{(d-1)^2}{\mathbf{c}}^{-1} \left(1 - \alpha_\mathbf{a}^2 + \frac{1-d}{d^2+d}\alpha_\mathbf{b}^2\right)}.
    \end{align}
\end{theorem}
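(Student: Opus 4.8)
The plan is to exhibit the code word in Eq.~(\ref{general code word}) as an SDPI code and then apply Corollary~\ref{fully reduced KL conditions}, reducing the problem to checking QF1--QF3. First I would verify that the code is doubly permutation-invariant, i.e.\ that $\mathbf a,\mathbf b,\mathbf c\in\mathcal W_{d,N}$ with $N=(d-1)^2$. This amounts to three routine checks for each multiplet: that its entries sum to $(d-1)^2$ (which is $\equiv 1 \bmod d$, as required); that $v_1,\dots,v_{d-1}$ are congruent mod $d$ (all $\equiv 0$ for $\mathbf a$ and $\mathbf b$, all $\equiv -1$ for $\mathbf c$); and that $\bigoplus_{\xi}\xi v_\xi \equiv 0 \bmod d$ (for $\mathbf c$ this uses that $d$ is odd, so $(d-1)^2/2$ is an integer). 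I would also note that $\mathbf a,\mathbf b,\mathbf c$ represent distinct $\hat{\mathcal R}$-classes, so Eq.~(\ref{DPI code word}) is well-formed.

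Next I would establish sparsity. The support of $\ket{\overline 0}$ consists of the $\mathcal R$-classes $[\mathbf a]=\{\mathbf a\}$, $[\mathbf b]$ (the $d-1$ multiplets obtained by placing the block $d(d-3)$ in each of positions $1,\dots,d-1$), and $[\mathbf c]=\{\mathbf c\}$, and the definition of sparsity requires $\sum_\xi|u_\xi-u'_{\xi+\delta}|>4$ for all distinct members $\mathbf u,\mathbf u'$ of these classes and all cyclic shifts $\delta\in\mathbf Z_d$. A short case analysis, organised by the distinctive height profiles --- $\mathbf a$ a single spike of height $(d-1)^2$, $\mathbf b$ two spikes of heights $d+1$ and $d(d-3)$, and $\mathbf c$ flat of height $d-1$ with a single gap --- shows each such distance is at least $2(d-1)\geq 8$ for $d\geq 5$. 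This is precisely where the hypothesis $d\geq 5$ enters: for $d=3$ one has $d(d-3)=0$, so $\mathbf b$ degenerates to $\mathbf a$ and sparsity fails.

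With the code shown to be SDPI, Corollary~\ref{fully reduced KL conditions} leaves only QF1--QF3. Using $\mathsf{\tilde D}_d^{(\ell)}\ket{S_\mathbf u}=(u_\ell-u_{\ell+1})\ket{S_\mathbf u}$ (so $\mathsf{\tilde D}_d^{(\ell)}$ is diagonal in the $\ket{S_\mathbf u}$ basis and cross terms vanish by orthogonality), the fact that $\mathsf{\overline X}_d^{\,d-1}$ acts as the cyclic shift $\mathbf u\mapsto(u_1,\dots,u_{d-1},u_0)$, and $\braket{S_\mathbf u|S_\mathbf u}=\binom{N}{\mathbf u}$, I would rewrite QF1 and QF2 as explicit homogeneous linear equations in $\alpha_\mathbf a^2,\alpha_\mathbf b^2,\alpha_\mathbf c^2$, with coefficients the class sums of $(w'_{d-1}-w'_0)$ and $(w'_{d-1}-w'_0)^2$ weighted by $\binom{N}{\mathbf w}$. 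For QF3 the operator $\mathsf{\tilde S}_d^{(0,1)}$ is off-diagonal, so here sparsity is essential: I would compute $\bra{S_\mathbf u}\mathsf{\tilde S}_d^{(0,1)}\mathsf{\tilde S}_d^{(0,1)}\ket{S_\mathbf u}=(2u_0u_1+u_0+u_1)\binom{N}{\mathbf u}$ and use sparsity to discard the components of $\mathsf{\tilde S}_d^{(0,1)}\mathsf{\tilde S}_d^{(0,1)}\ket{S_\mathbf u}$ that land two units away in the first two levels (at $\ell_1$-distance exactly $4$ from $\mathbf u$, hence outside the support), obtaining the third linear equation.

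Finally I would solve this $3\times 3$ homogeneous system together with the normalization $\braket{\overline 0|\overline 0}=\sum_\mathbf w |[\mathbf w]|\binom{N}{\mathbf w}\,|\alpha_\mathbf w|^2=1$ and match the solution to the stated coefficients, checking that the radicands are positive for every odd $d\geq 5$ (for instance $1-d\alpha_\mathbf a^2>0$ reduces to $(d-1)^2(d+1)>0$). The main obstacle I anticipate is that the three quadratic forms yield three homogeneous linear equations in only three unknowns; for a nonzero code word to exist these must be mutually dependent, so the crux is verifying that the $3\times 3$ coefficient determinant vanishes identically in $d$. Once this consistency is confirmed the solution space is one-dimensional, normalization fixes the scale, and the positivity of the three coefficients for all $d\geq 5$ is a finite check of polynomial inequalities.
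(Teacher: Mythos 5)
Your proposal takes essentially the same route as the paper's proof: confirm $\mathbf a,\mathbf b,\mathbf c\in\mathcal W_{d,N}$ with $N=(d-1)^2\equiv 1 \bmod d$, establish sparsity by an $\ell_1$-distance case analysis, invoke Corollary~\ref{fully reduced KL conditions} to reduce everything to QF1--QF3, convert those to homogeneous linear equations in the weighted squared amplitudes using $\mathsf{\tilde D}_d^{(\ell)}\ket{S_\mathbf u}=(u_\ell-u_{\ell\oplus 1})\ket{S_\mathbf u}$, the diagonal part $(2u_0u_1+u_0+u_1)\binom{N}{\mathbf u}$ of $(\mathsf{\tilde S}_d^{(0,1)})^2$ (with sparsity killing the distance-$4$ cross terms), and the shift action of $\mathsf{\overline X}_d^{\,d-1}$, and finally solve together with normalization. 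Your intermediate formulas and checks (including the minimum distance $2(d-1)\geq 8$, the degeneration $\mathbf b=\mathbf a$ at $d=3$, and the positivity reduction of $1-d\alpha_{\mathbf a}^2$) are correct, and the algebra you defer --- writing the three linear forms explicitly in $d$ and confirming their consistency and the positivity of the solution --- is precisely the computation the paper's supplemental proof carries out.
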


\begin{proof}
    See Supplemental Material \cite{supp}.
\end{proof}

\subsection{Alternative definitions}
As mentioned in Section II C, we could also consider alternative definitions of the action of $\mathsf{Z}_L$ on $\mathcal{H}_L$. In general, we could define it, for any $\eta \in \mathbf{Z}_d^\times$, as
\begin{equation}
    \mathsf{Z}_L \ket{k}_L = \zeta_d^{k\eta} \ket{k}_L
\end{equation}
for all $k \in \mathbf{Z}_d$. Ultimately, we are still able to implement all qudit Pauli logical operations, and that is what really matters anyway.

Once we have chosen $\eta \in \mathbf{Z}_d^\times$, we must then restrict ourselves to SDPI codes that map to $\mathcal{H}_P = \mathrm{DSym}^N(\mathbf{C}^d)$ with $N \equiv \eta \mod d$.

Now, let us consider a concrete example of such codes. Suppose we fix $\eta = d-1$. In this case, we can construct an error-correcting SDPI code for $d=7$ mapping to a physical Hilbert space of smaller dimension than the previous example. In particular, this is the map $\mathcal{C}_4 \colon \mathbf{C}^7 \longrightarrow \mathrm{DSym}^{20}(\mathbf{C}^7)$ defined by
\begin{align}
    \ket{\overline{0}} = \frac{1}{7}&\sqrt{\frac{3}{7}} \ket{\ket{S_{(20,0,0,0,0,0,0)}}}\nonumber\\
    &+ \frac{1}{14\sqrt{6783}} \ket{\ket{S_{(6,14,0,0,0,0,0)}}}\nonumber\\
    &+ \frac{1}{11760\sqrt{230945}} \ket{\ket{S_{(2,3,3,3,3,3,3)}}}
\end{align}

\section{Conclusion}
In conclusion, we formulated a general procedure for constructing quantum error-correcting codes for qudits of any odd dimension $d \geq 3$. To do so, we determined the branching rules for $\mathrm{SU}(d) \downarrow \mathrm{HW}(d)$, which to our knowledge has not yet been reported in the literature. Moreover, by exploiting various forms of symmetry, we managed to reduce the Knill-Laflamme conditions from a Brobdingnagian set of $d^6$ quadratic forms to just 3. This allowed us to devise a general procedure, which we then used to explicitly construct an infinite class of error-correcting codes that encode a logical qudit into $(d-1)^2$ physical qudits.

Our work opens a broad spectrum of promising directions for further research. On the theoretical side, it would be interesting to extend our framework to a universal set of logical qudit gates. Currently, our procedure yields error-correcting codes that only allow us to perform qudit Pauli operations, which does not form a universal gate set. It has been shown that the set $\mathbb{C}(d) \cup \left\{ \mathsf{CSUM},\mathsf{T} \right\}$ is a universal set of qudit gates \cite{PhysRevA.86.022316,PhysRevX.2.041021,nebe2006self,PhysRevA.99.052307,Ringbauer2022}. Above, $\mathbb{C}(d)$ is the single-qudit Clifford group, $\mathsf{CSUM}$ is the generalization of the $\mathsf{CNOT}$ gate for qubits, and $\mathsf{T}$ is a single-qudit non-Clifford gate. Similar to the qubit case \cite{Gross2021}, one possible avenue for future work would then be to consider the single-qudit Clifford group as the finite subgroup of logical operations and subsequently determine how to implement $\mathsf{CSUM}$ and $\mathsf{T}$.

Furthermore, it would be worthwhile to investigate, for each $d$, the minimum value of $N$ for which there exists an error-correcting SDPI code that maps to $\mathrm{DSym}^N(\mathbf{C}^d)$. This would give insight into how resource-efficient our error-correcting SDPI codes are, i.e. how many physical qudits are required to encode a single logical qudit. In principle, this work can be done computationally, but we expect that this would become very computationally intensive as $d$ grows. To obtain general inequalities that work for the set of all odd integers $d \geq 3$ or at least some subset thereof, a proper mathematical analysis of the problem would probably be required.

Finally, going beyond our abstract representation-theoretic foray into quantum error correction on qudits, another natural next step would be to experimentally realize our error-correcting codes. Since the physical Hilbert space of our codes was chosen to be a vector space on which $\mathrm{SU}(d)$ irreps are defined, a natural physical implementation would be with ultracold atoms that realize $\mathrm{SU}(d)$ spin models \cite{PhysRevA.93.051601,Zhang2014,Taie2022}, arrays of high-spin nuclei in silicon \cite{fernandezdefuentes2024}, or nuclear spin ensembles in GaAs quantum dots \cite{appel2024manybodyquantumregisterspin}.

\begin{acknowledgments}
R.F.U. thanks Charlotte Franke for introducing him to Refs. \cite{Gross2021,PhysRevA.108.022424}, which inspired the work in this paper. R.F.U. also thanks Khoi Le Nguyen Nguyen for a helpful discussion on branching rules. This work was supported in part by the UKRI EPSRC Quantum Computing and Simulation Hub (EP/T001062/1) through the Partnership Resource Fund PRF-09-I-06 (D.G.). D.G. acknowledges a Royal Society University Research Fellowship.
\end{acknowledgments}


\bibliography{ms}

\end{document}



\title{Supplemental material for ``Qudit-based quantum error-correcting\\codes from irreducible representations of $\mathrm{SU}(d)$''}

\author{Robert Frederik Uy}
\affiliation{%
 Peterhouse, University of Cambridge, Cambridge CB2 1RD, United Kingdom
}%
\affiliation{
 Cavendish Laboratory, University of Cambridge, Cambridge CB3 0HE, United Kingdom
}%

\author{Dorian Gangloff}
\affiliation{
 Cavendish Laboratory, University of Cambridge, Cambridge CB3 0HE, United Kingdom
}%

\date{\today}

\maketitle


\section{Tensor product representations}
Many of the subsequent proofs assume knowledge about tensor product representations, so we have included some useful background information.

\vskip 12pt

Suppose $\boldsymbol{\Phi}_1 \colon G \longrightarrow \mathrm{GL}(V_1)$ and $\boldsymbol{\Phi}_2 \colon G \longrightarrow \mathrm{GL}(V_2)$ are representations of a Lie group $G$. Then the tensor product representation $\boldsymbol{\Phi}_1 \otimes \boldsymbol{\Phi}_2$ is given by
\begin{equation}
    \label{lie group tensor product rep}
    (\boldsymbol{\Phi}_1 \otimes \boldsymbol{\Phi}_2)(g) = \boldsymbol{\Phi}_1(g) \otimes \boldsymbol{\Phi}_2(g)
\end{equation}
for all $g \in G$ [33].

\vskip 12pt

Now, let $\boldsymbol{\hat{\Phi}}_1 \colon \mathfrak{g} \longrightarrow \mathrm{End}(V_1)$ and $\boldsymbol{\hat{\Phi}}_2 \colon \mathfrak{g} \longrightarrow \mathrm{End}(V_2)$ be the corresponding representations of the associated Lie algebra $\mathfrak{g}$. The tensor product representation $\boldsymbol{\hat{\Phi}}_1 \otimes \boldsymbol{\hat{\Phi}}_2$ is given by
\begin{equation}
    \label{lie algebra tensor product rep}
    (\boldsymbol{\hat{\Phi}}_1 \otimes \boldsymbol{\hat{\Phi}}_2)(\mathsf{G}) = \boldsymbol{\hat{\Phi}}_1(\mathsf{G}) \otimes \mathsf{I} + \mathsf{I} \otimes \boldsymbol{\hat{\Phi}}_2(\mathsf{G})
\end{equation}
for all $\mathsf{G} \in \mathfrak{g}$ [33].

\section{Proof of Lemma 4}
\renewcommand{\thetheorem}{4}
\begin{lemma}
    \label{Heisenberg-Weyl symmetry}
    Observe that
    \begin{align}
	\mathsf{\overline{X}}_d^\dagger \tilde{\mathsf{S}}^{(j,k)}_d \mathsf{\overline{X}}_d &= \tilde{\mathsf{S}}^{(j\ominus1,k\ominus1)}_d,\\
	\mathsf{\overline{X}}_d^\dagger \tilde{\mathsf{A}}^{(j,k)}_d \mathsf{\overline{X}}_d &= \tilde{\mathsf{A}}^{(j\ominus1,k\ominus1)}_d,\\
	\mathsf{\overline{X}}_d^\dagger \tilde{\mathsf{D}}_d^{(\ell)} \mathsf{\overline{X}}_d &= \tilde{\mathsf{D}}_d^{(\ell\ominus1)},\\
	\mathsf{\overline{Z}}_d^\dagger \tilde{\mathsf{S}}^{(j,k)}_d \mathsf{\overline{Z}}_d &= \Re\left(\zeta_d^{k-j}\right)\tilde{\mathsf{S}}^{(j,k)}_d - \Im\left(\zeta_d^{k-j}\right)\tilde{\mathsf{A}}^{(j,k)}_d,
    \end{align}
    for all $0 \leq j < k \leq d-1, 0 \leq \ell \leq d-2$. Above, $\ominus$ denotes subtraction modulo $d$.
\end{lemma}

\begin{proof}
    We first show that these equations are true for the fundamental irrep $\boldsymbol{\pi}_{d;1}^{}$. Observe that
    \begin{align}
        \mathsf{X}_d^\dagger \mathsf{S}^{(j,k)}_d \mathsf{X}_d &= \left(\sum_{p \in \mathbf{Z}_d} \ket{p}\bra{p \oplus 1}\right) \big(\ket{j}\bra{k} + \ket{k}\bra{j}\big) \left(\sum_{p \in \mathbf{Z}_d} \ket{p \oplus 1}\bra{p}\right)\\
        &= \ket{j \ominus 1}\braket{j|j}\braket{k|k}\bra{k \ominus 1} + \ket{k \ominus 1}\braket{k|k}\braket{j|j}\bra{j \ominus 1}\\
        &= \ket{j \ominus 1}\bra{k \ominus 1} + \ket{k \ominus 1}\bra{j \ominus 1}\\
        &= \mathsf{S}^{(j \ominus 1,k \ominus 1)}_d,\\
        \nonumber\\
        \mathsf{X}_d^\dagger \mathsf{A}^{(j,k)}_d \mathsf{X}_d &= \left(\sum_{p \in \mathbf{Z}_d} \ket{p}\bra{p \oplus 1}\right) \big(-i\ket{j}\bra{k} + i\ket{k}\bra{j}\big) \left(\sum_{p \in \mathbf{Z}_d} \ket{p \oplus 1}\bra{p}\right)\\
        &= -i \ket{j \ominus 1}\bra{k \ominus 1} + i\ket{k \ominus 1}\bra{j \ominus 1}\\
        &= \mathsf{A}^{(j \ominus 1,k \ominus 1)}_d,
    \end{align}
    \begin{align}
        \mathsf{X}_d^\dagger \mathsf{D}_d^{(\ell)} \mathsf{X}_d &= \left(\sum_{p \in \mathbf{Z}_d} \ket{p}\bra{p \oplus 1}\right) \big(\ket{\ell}\bra{\ell} - \ket{\ell \oplus 1}\bra{\ell \oplus 1}\big) \left(\sum_{p \in \mathbf{Z}_d} \ket{p \oplus 1}\bra{p}\right)\\
        &= \ket{\ell \ominus 1}\bra{\ell \ominus 1} - \ket{\ell}\bra{\ell}\\
        &= \mathsf{D}^{(\ell \ominus 1)}_d,\\
        \nonumber\\
        \mathsf{Z}_d^\dagger \mathsf{S}^{(j,k)}_d \mathsf{Z}_d &= \left(\sum_{p \in \mathbf{Z}_d} \zeta_d^{-p} \ket{p}\bra{p}\right) \big(\ket{j}\bra{k} + \ket{k}\bra{j}\big) \left(\sum_{p \in \mathbf{Z}_d} \zeta_d^p \ket{p}\bra{p}\right)\\
	&= \zeta_d^{k-j} \ket{j}\bra{k} + \zeta_d^{j-k}\ket{k}\bra{j}\\
	&= \Re(\zeta_d^{k-j}) \big(\ket{j}\bra{k} + \ket{k}\bra{j}\big) - \Im(\zeta_d^{k-j}) \big(-i\ket{j}\bra{k} + i\ket{k}\bra{j}\big)\\
	&= \Re\left(\zeta_d^{k-j}\right)\tilde{\mathsf{S}}^{(j,k)}_d - \Im\left(\zeta_d^{k-j}\right)\tilde{\mathsf{A}}^{(j,k)}_d,
    \end{align}
    as claimed.
    
    \vskip 12pt
    
    We then proceed by applying Eqs. (\ref{lie group tensor product rep}) and (\ref{lie algebra tensor product rep}). Notice that, since $\mathsf{X}_d$ and $\mathsf{Z}_d$ are unitary, we have
    \begin{align}
        \mathsf{\overline{X}}_d^\dagger \tilde{\mathsf{S}}^{(j,k)}_d \mathsf{\overline{X}}_d &= \left(\mathsf{X}_d^\dagger\right)^{\otimes N} \left( \sum_{i=0}^{N-1} \mathsf{I}^{\otimes i} \otimes \mathsf{S}^{(j,k)} \otimes \mathsf{I}^{\otimes (N-1-i)} \right) \mathsf{X}_d^{\otimes N}\\
        &= \sum_{i=0}^{N-1} \mathsf{I}^{\otimes i} \otimes \left(\mathsf{X}_d^\dagger\mathsf{S}^{(j,k)}\mathsf{X}_d\right) \otimes \mathsf{I}^{\otimes (N-1-i)}\\
        &= \sum_{i=0}^{N-1} \mathsf{I}^{\otimes i} \otimes \mathsf{S}^{(j \ominus 1,k \ominus 1)}_d \otimes \mathsf{I}^{\otimes (N-1-i)}\\
        &= \mathsf{\tilde{S}}^{(j \ominus 1,k \ominus 1)}_d,\\
        \nonumber\\
        \mathsf{\overline{X}}_d^\dagger \tilde{\mathsf{A}}^{(j,k)}_d \mathsf{\overline{X}}_d &= \left(\mathsf{X}_d^\dagger\right)^{\otimes N} \left( \sum_{i=0}^{N-1} \mathsf{I}^{\otimes i} \otimes \mathsf{A}^{(j,k)} \otimes \mathsf{I}^{\otimes (N-1-i)} \right) \mathsf{X}_d^{\otimes N}\\
        &= \sum_{i=0}^{N-1} \mathsf{I}^{\otimes i} \otimes \left(\mathsf{X}_d^\dagger\mathsf{A}^{(j,k)}\mathsf{X}_d\right) \otimes \mathsf{I}^{\otimes (N-1-i)}\\
        &= \sum_{i=0}^{N-1} \mathsf{I}^{\otimes i} \otimes \mathsf{A}^{(j \ominus 1,k \ominus 1)}_d \otimes \mathsf{I}^{\otimes (N-1-i)}\\
        &= \mathsf{\tilde{A}}^{(j \ominus 1,k \ominus 1)}_d,\\
        \nonumber\\
        \mathsf{\overline{X}}_d^\dagger \mathsf{\tilde{D}}_d^{(\ell)} \mathsf{\overline{X}}_d &= \left(\mathsf{X}_d^\dagger\right)^{\otimes N} \left( \sum_{i=0}^{N-1} \mathsf{I}^{\otimes i} \otimes \mathsf{D}_d^{(\ell)} \otimes \mathsf{I}^{\otimes (N-1-i)} \right) \mathsf{X}_d^{\otimes N}\\
        &= \sum_{i=0}^{N-1} \mathsf{I}^{\otimes i} \otimes \left(\mathsf{X}_d^\dagger\mathsf{D}_d^{(\ell)}\mathsf{X}_d\right) \otimes \mathsf{I}^{\otimes (N-1-i)}\\
        &= \sum_{i=0}^{N-1} \mathsf{I}^{\otimes i} \otimes \mathsf{D}_d^{(\ell \ominus 1)} \otimes \mathsf{I}^{\otimes (N-1-i)}\\
        &= \mathsf{\tilde{D}}_d^{(\ell \ominus 1)},
    \end{align}
    \begin{align}
        \mathsf{\overline{Z}}_d^\dagger \mathsf{\tilde{S}}_d^{(j,k)} \mathsf{\overline{Z}}_d &= \left(\mathsf{Z}_d^\dagger\right)^{\otimes N} \left( \sum_{i=0}^{N-1} \mathsf{I}^{\otimes i} \otimes \mathsf{S}^{(j,k)}_d \otimes \mathsf{I}^{\otimes (N-1-i)} \right) \mathsf{Z}_d^{\otimes N}\\
        &= \sum_{i=0}^{N-1} \mathsf{I}^{\otimes i} \otimes \left(\mathsf{Z}_d^\dagger\mathsf{S}^{(j,k)}_d\mathsf{Z}_d\right) \otimes \mathsf{I}^{\otimes (N-1-i)}\\
        &= \sum_{i=0}^{N-1} \mathsf{I}^{\otimes i} \otimes \left(\Re\left(\zeta_d^{k-j}\right)\mathsf{S}^{(j,k)}_d - \Im\left(\zeta_d^{k-j}\right)\mathsf{A}^{(j,k)}_d\right) \otimes \mathsf{I}^{\otimes (N-1-i)}\\
        &= \Re\left(\zeta_d^{k-j}\right)\mathsf{\tilde{S}}^{(j,k)}_d - \Im\left(\zeta_d^{k-j}\right)\mathsf{\tilde{A}}^{(j,k)}_d,
    \end{align}
    as claimed.
\end{proof}

\section{Proof of Theorem 5}
In addition to Lemma 4 in the main text, the proof of Theorem 5 will require extensive use of the following lemmas:
\renewcommand{\thetheorem}{S1}
\begin{lemma}
    \label{orthogonality of Zd eigenspaces}
    Suppose $\mathbf{u},\mathbf{v} \in \mathrm{Sym}^N(\mathbf{C}^d)$ satisfy $\mathsf{\overline{Z}}_d \mathbf{u} = \zeta_d^{w_\mathbf{u}} \mathbf{u}$ and $\mathsf{\overline{Z}}_d \mathbf{v} = \zeta_d^{w_\mathbf{v}} \mathbf{v}$, with $\zeta_d^{w_\mathbf{u}} \neq \zeta_d^{w_\mathbf{v}}$. Then
    \begin{equation}
        \mathbf{u}^\dagger \mathbf{v} = 0.
    \end{equation}
\end{lemma}

\begin{proof}
	Observe that
	\begin{align}
		\mathsf{\overline{Z}}_d \mathbf{u} = \zeta_d^{w_\mathbf{u}} \mathbf{u} \quad\Longrightarrow\quad \mathbf{u}^\dagger\mathsf{\overline{Z}}_d\mathbf{v} = \zeta_d^{w_\mathbf{u}} \mathbf{u}^\dagger \mathbf{v}\\
		\mathsf{\overline{Z}}_d \mathbf{v} = \zeta_d^{w_\mathbf{v}} \mathbf{v} \quad\Longrightarrow\quad \mathbf{u}^\dagger\mathsf{\overline{Z}}_d\mathbf{v} = \zeta_d^{w_\mathbf{v}} \mathbf{u}^\dagger \mathbf{v}
	\end{align}
	Thus, $(\zeta_d^{w_\mathbf{u}} - \zeta_d^{w_\mathbf{v}}) \mathbf{u}^\dagger \mathbf{v} = 0$, which implies that $\mathbf{u}^\dagger \mathbf{v} = 0$ since $\zeta_d^{w_\mathbf{u}} \neq \zeta_d^{w_\mathbf{v}}$.
\end{proof}

\vskip 12pt

\renewcommand{\thetheorem}{S2}
\begin{lemma}
    Let $\ket{S_\mathbf{u}} \in \mathrm{Sym}^N(\mathbf{C}^d)$. Observe that for all $0 \leq p < q \leq d-1$ and all $0 \leq \ell \leq d-2$,
    \begin{align}
        \mathsf{\tilde{S}}_d^{(p,q)} \ket{S_\mathbf{u}} &= (u_p+1)\ket{S_{\mathbf{v}}} + (u_q+1)\ket{S_{\mathbf{w}}},\\
        \mathsf{\tilde{D}}_d^{(\ell)} \ket{S_\mathbf{u}} &= (u_\ell - u_{\ell+1}) \ket{S_\mathbf{u}},
    \end{align}
    where $\mathbf{v} = (u_0,\cdots,u_p+1,\cdots,u_q-1,\cdots,u_{d-1})$ and $\mathbf{w} = (u_0,\cdots,u_p-1,\cdots,u_q+1,\cdots,u_{d-1})$.
\end{lemma}

[\textit{We adopt the convention that $\ket{S_\mathbf{t}} = \mathbf{0}$ if there exists $k$ such that $t_k < 0$ or $t_k > N$.}]

\begin{proof}
    Note that for each $i \in \mathbf{Z}_N$,
    \begin{align}
        \left(\mathsf{I}^{\otimes i} \otimes \mathsf{S}_d^{(p,q)} \otimes \mathsf{I}^{\otimes (N-1-i)}\right) \ket{S_{\mathbf{u}}} &= \sum_{\mathbf{u}' \in [\mathbf{u}]_\mathcal{R}^{}} \left(\mathsf{I}^{\otimes i} \otimes \mathsf{S}_d^{(p,q)} \otimes \mathsf{I}^{\otimes (N-1-i)}\right) \left(\ket{u'_0} \otimes \cdots \otimes \ket{u'_{d-1}}\right)\\
        &= \sum_{\mathbf{u}' \in [\mathbf{u}]_\mathcal{R}^{}, u'_i=p} \left(\mathsf{I}^{\otimes i} \otimes \mathsf{S}_d^{(p,q)} \otimes \mathsf{I}^{\otimes (N-1-i)}\right) \left(\ket{u'_0} \otimes \cdots \otimes \ket{p} \otimes \cdots \otimes \ket{u'_{d-1}}\right)\nonumber\\
        &\qquad + \sum_{\mathbf{u}' \in [\mathbf{u}]_\mathcal{R}^{}, u'_i=q} \left(\mathsf{I}^{\otimes i} \otimes \mathsf{S}_d^{(p,q)} \otimes \mathsf{I}^{\otimes (N-1-i)}\right) \left(\ket{u'_0} \otimes \cdots \otimes \ket{q} \otimes \cdots \otimes \ket{u'_{d-1}}\right)\\
        &= \sum_{\mathbf{u}' \in [\mathbf{u}]_\mathcal{R}^{}, u'_i=p} \ket{u'_0} \otimes \cdots \otimes \ket{q} \otimes \cdots \otimes \ket{u'_{d-1}}\nonumber\\
        &\qquad + \sum_{\mathbf{u}' \in [\mathbf{u}]_\mathcal{R}^{}, u'_i=q} \ket{u'_0} \otimes \cdots \otimes \ket{p} \otimes \cdots \otimes \ket{u'_{d-1}}\\
        &= \sum_{\mathbf{w}' \in [\mathbf{w}]_\mathcal{R}^{}, w'_i=q} \ket{w'_0} \otimes \cdots \otimes \ket{q} \otimes \cdots \otimes \ket{w'_{d-1}}\nonumber\\
        &\qquad + \sum_{\mathbf{v}' \in [\mathbf{v}]_\mathcal{R}^{}, v'_i=p} \ket{v'_0} \otimes \cdots \otimes \ket{p} \otimes \cdots \otimes \ket{v'_{d-1}},
    \end{align}
    Thus, we have
    \begin{align}
        \mathsf{\tilde{S}}_d^{(p,q)}\ket{S_\mathbf{u}} &= \sum_{i\in\mathbf{Z}_N} \left(\sum_{\mathbf{w}' \in [\mathbf{w}]_\mathcal{R}^{}, w'_i=q} \ket{w'_0} \otimes \cdots \otimes \ket{q} \otimes \cdots \otimes \ket{w'_{d-1}} + \sum_{\mathbf{v}' \in [\mathbf{v}]_\mathcal{R}^{}, v'_i=p} \ket{v'_0} \otimes \cdots \otimes \ket{p} \otimes \cdots \otimes \ket{v'_{d-1}}\right)\\
        &= v_p \sum_{\mathbf{v}' \in [\mathbf{v}]_\mathcal{R}^{}} \ket{v'_0} \otimes \cdots \otimes \ket{v'_{d-1}} + w_q \sum_{\mathbf{w}' \in [\mathbf{w}]_\mathcal{R}^{}} \ket{w'_0} \otimes \cdots \otimes \ket{w'_{d-1}}\\
        &= (u_p+1)\ket{S_\mathbf{v}} + (u_q+1)\ket{S_\mathbf{w}},
    \end{align}
    where we have used the fact that, for each $\mathbf{v}' \in [\mathbf{v}]_\mathcal{R}^{}$, the tensor product $\ket{v'_0} \otimes \cdots \otimes \ket{v'_{d-1}}$ was counted $v_p$ times in the sum
    \begin{equation}
        \sum_{i\in\mathbf{Z}_N} \sum_{\mathbf{v}' \in [\mathbf{v}]_\mathcal{R}^{}, v'_i=p} \ket{v'_0} \otimes \cdots \otimes \ket{p} \otimes \cdots \otimes \ket{v'_{d-1}}.
    \end{equation}

    \vskip 12pt

   Similarly, we have
    \begin{align}
        \mathsf{\tilde{D}}_d^{(\ell)} \ket{S_\mathbf{u}} &= \left(\sum_{i \in \mathbf{Z}_N} \mathsf{I}^{\otimes i} \otimes \mathsf{D}_d^{(\ell)} \otimes \mathsf{I}^{\otimes (N-1-i)}\right) \left(\sum_{\mathbf{u}' \in [\mathbf{u}]_\mathcal{R}^{}} \ket{u'_0} \otimes \cdots \otimes \ket{u'_{d-1}}\right)\\
        &= \sum_{i \in \mathbf{Z}_N} \left( \sum_{\mathbf{u}' \in [\mathbf{u}]_\mathcal{R}^{},u'_i = \ell} \ket{u'_0} \otimes \cdots \otimes \ket{\ell} \otimes \cdots \otimes \ket{u'_{d-1}} - \sum_{\mathbf{u}' \in [\mathbf{u}]_\mathcal{R}^{},u'_i = \ell+1} \ket{u'_0} \otimes \cdots \otimes \ket{\ell+1} \otimes \cdots \otimes \ket{u'_{d-1}} \right)\\
        &= u_\ell \sum_{\mathbf{u}' \in [\mathbf{u}]_\mathcal{R}^{}} \ket{u'_0} \otimes \cdots \otimes \ket{u'_{d-1}} - u_{\ell+1} \sum_{\mathbf{u}' \in [\mathbf{u}]_\mathcal{R}^{}} \ket{u'_0} \otimes \cdots \otimes \ket{u'_{d-1}}\\
        &= (u_\ell - u_{\ell+1})\ket{S_\mathbf{u}},
    \end{align}
    as claimed.
\end{proof}

\vskip 12pt

It is also helpful to delineate the  Knill-Laflamme conditions in our particular case:
\begin{enumerate}[\rm (KL1),leftmargin=3em,itemsep=2pt]
    \item $\bra{\overline{i}} \mathsf{\tilde{S}}^{(p,q)}_d \ket{\overline{j}} = c_{(\mathsf{I},\mathsf{\tilde{S}}^{(p,q)}_d)}\delta_{ij}$ for all $i,j \in \mathbf{Z}_d$ and $0 \leq p < q \leq d-1$
    \item $\bra{\overline{i}} \mathsf{\tilde{A}}^{(p,q)}_d \ket{\overline{j}} = c_{(\mathsf{I},\mathsf{\tilde{A}}^{(p,q)}_d)}\delta_{ij}$ for all $i,j \in \mathbf{Z}_d$ and $0 \leq p < q \leq d-1$
    \item $\bra{\overline{i}} \mathsf{\tilde{D}}^{(\ell)}_d \ket{\overline{j}} = c_{(\mathsf{I},\mathsf{\tilde{D}}^{(\ell)}_d)}\delta_{ij}$ for all $i,j \in \mathbf{Z}_d$ and $0 \leq \ell \leq d-2$
    \item $\bra{\overline{i}} \tilde{\Gamma}_d^{(p,q)} \tilde{\Upsilon}_d^{(r,s)} \ket{\overline{j}} = c_{(\tilde{\Gamma}_d^{(p,q)},\tilde{\Upsilon}_d^{(r,s)})}\delta_{ij}$ for all $i,j \in \mathbf{Z}_d$, $\Gamma,\Upsilon \in \{\mathsf{S},\mathsf{A}\}$, $0 \leq p < q \leq d-1$, and $0 \leq r < s \leq d-1$
    \item $\bra{\overline{i}} \tilde{\Lambda} \mathsf{\tilde{D}}^{(\ell)}_d \ket{\overline{j}} = c_{(\tilde{\Lambda},\mathsf{\tilde{D}}^{(\ell)}_d)}\delta_{ij}$ for all $i,j \in \mathbf{Z}_d$, $\tilde{\Lambda} \in \left\{ \mathsf{\tilde{S}}^{(p,q)}_d, \mathsf{\tilde{A}}^{(p,q)}_d \mid 0 \leq p < q \leq d-1 \right\}$, and $\ell \in \{0, \cdots, d-2\}$
    \item $\bra{\overline{i}} \mathsf{\tilde{D}}^{(\ell)}_d \tilde{\Lambda} \ket{\overline{j}} = c_{(\mathsf{\tilde{D}}^{(\ell)}_d,\tilde{\Lambda})}\delta_{ij}$ for all $i,j \in \mathbf{Z}_d$, $\tilde{\Lambda} \in \left\{ \mathsf{\tilde{S}}^{(p,q)}_d, \mathsf{\tilde{A}}^{(p,q)}_d \mid 0 \leq p < q \leq d-1 \right\}$, and $\ell \in \{0, \cdots, d-2\}$
    \item $\bra{\overline{i}} \mathsf{\tilde{D}}^{(p)}_d \mathsf{\tilde{D}}^{(q)}_d \ket{\overline{j}} = c_{(\mathsf{\tilde{D}}^{(p)}_d, \mathsf{\tilde{D}}^{(q)}_d)}\delta_{ij}$ for all $i,j \in \mathbf{Z}_d$ and $p,q \in \{0,\cdots,d-2\}$
\end{enumerate}
Above, we have already used the fact that the elements of $\mathcal{E}$ are Hermitian. [\textit{Here, we adopt physicists' definition of a Lie algebra, which differs from that of mathematicians by a factor of $i$.}] There is, therefore, no need to separately consider redundant conditions like $\bra{\overline{i}} \mathsf{\tilde{S}}^{(p,q)\dagger}_d \ket{\overline{j}} = 0$.

\vskip 12pt

Now, let us proceed with stating and proving Theorem 5:

\renewcommand{\thetheorem}{5}
\begin{theorem}
    A code $\mathcal{C} \colon \mathbf{C}^d \longrightarrow \mathrm{Sym}^N(\mathbf{C}^d)$ is a QECC if it satisfies the following conditions:
    \begin{enumerate}[itemsep=2pt]
        \item[\rm (C1)] $\bra{\overline{i}} \mathsf{\tilde{S}}^{(0,n)}_d \ket{\overline{j}} = 0$ for all $i,j \in \mathbf{Z}_d$ with $i \neq j$ and $n \in \left\{1,\cdots,\frac{d-1}{2}\right\}$
        \item[\rm (C2)] $\bra{\overline{i}} \tilde{\Gamma}_d^{(p,q)} \tilde{\Upsilon}_d^{(r,s)} \ket{\overline{j}} = c_{(\tilde{\Gamma}_d^{(p,q)},\tilde{\Upsilon}_d^{(r,s)})}\delta_{ij}$ for all $i,j \in \mathbf{Z}_d$, $\Gamma,\Upsilon \in \{\mathsf{S},\mathsf{A}\}$, $0 \leq p < q \leq d-1$, and $0 \leq r < s \leq d-1$
        \item[\rm (C3)] $\bra{\overline{k}} \mathsf{\tilde{D}}_d^{(d-2)} \ket{\overline{k}} = c_{(\mathsf{I}, \mathsf{\tilde{D}}^{(d-2)}_d)}$ for all $k \in \mathbf{Z}_d$
        \item[\rm (C4)] $\bra{\overline{k}} \mathsf{\tilde{D}}^{(\ell)}_d \mathsf{\tilde{D}}^{(d-2)}_d \ket{\overline{k}} = c_{(\mathsf{\tilde{D}}^{(\ell)}_d, \mathsf{\tilde{D}}^{(d-2)}_d)}$ for all $k \in \mathbf{Z}_d$ and $\ell \in \{0, \cdots, d-2\}$
    \end{enumerate}
    for some set of constants $\left\{ c_{(\mathsf{E},\mathsf{F})} \mid \mathsf{E},\mathsf{F} \in \mathcal{E} \right\}$.
\end{theorem}

\begin{proof}
    We systematically prove each KL condition one at a time:

    \vskip 6pt
    
    \paragraph{Condition KL1} We first show that $\bra{\overline{k}} \mathsf{\tilde{S}}^{(0,n)}_d \ket{\overline{k}} = 0$ for all $k \in \mathbf{Z}_d$ and $n \in \left\{1,\cdots,\frac{d-1}{2}\right\}$. Fix $k' \in \mathbf{Z}_d$ and $n' \in \left\{1,\cdots,\frac{d-1}{2}\right\}$. Observe that for any $\mathbf{u} \in \mathcal{U}_{d,N,k'}$,
    \begin{align}
       \mathsf{\overline{Z}}_d\ket{S_{(u_0+1,\cdots,u_{n'}-1,\cdots,u_{d-1})}} &= \zeta_d^{k'-n'}\ket{S_{(u_0+1,\cdots,u_{n'}-1,\cdots,u_{d-1})}},\\
       \mathsf{\overline{Z}}_d\ket{S_{(u_0-1,\cdots,u_{n'}+1,\cdots,u_{d-1})}} &= \zeta_d^{k'+n'}\ket{S_{(u_0-1,\cdots,u_{n'}+1,\cdots,u_{d-1})}}.
    \end{align}
    Since $d \nmid n'$, it follows that $\zeta_d^{k'} \not\in \left\{\zeta_d^{k'-n'},\zeta_d^{k'+n'}\right\}$. By Lemma S1 and Lemma S2, we have $\bra{\overline{k'}} \mathsf{\tilde{S}}_d^{(0,n')} \ket{S_\mathbf{u}} = 0$ for all $\mathbf{u} \in \mathcal{U}_{d,N,k'}$. Since $\ket{\overline{k'}} \in \mathrm{span}_\mathbf{C} \left\{ \ket{S_\mathbf{u}} \mid \mathbf{u} \in \mathcal{U}_{d,N,k'} \right\}$, this implies that $\bra{\overline{k'}} \mathsf{\tilde{S}}_d^{(0,n')} \ket{\overline{k'}} = 0$, as claimed.

    \vskip 16pt

    Combining the above result with Hypothesis C1, we have
    \begin{equation}
        \label{C1 intermediate step 1}
        \bra{\overline{i}} \mathsf{\tilde{S}}^{(0,n)}_d \ket{\overline{j}} = 0
    \end{equation}
    for all $i,j \in \mathbf{Z}_d, n \in \left\{1,\cdots,\frac{d-1}{2}\right\}$.

    \vskip 16pt

    To get from this to Condition KL1, we use the Heisenberg-Weyl symmetry of $\mathfrak{su}(d)$. Fix $i',j' \in \mathbf{Z}_d$ and $0 \leq p' < q' \leq d-1$. By Lemma 4, we have
    \begin{equation}
        \mathsf{\tilde{S}}_d^{(p',q')} = \mathsf{\overline{X}}_d^{p'} \mathsf{\tilde{S}}_d^{(0,q'-p')} \left(\mathsf{\overline{X}}_d^\dagger\right)^{p'} = \mathsf{\overline{X}}_d^{q'} \mathsf{\tilde{S}}_d^{(d-q'+p',0)} \left(\mathsf{\overline{X}}_d^\dagger\right)^{q'} = \mathsf{\overline{X}}_d^{q'} \mathsf{\tilde{S}}_d^{(0,d-q'+p')} \left(\mathsf{\overline{X}}_d^\dagger\right)^{q'},
    \end{equation}
    where we have used the symmetry of $\mathsf{\tilde{S}}_d^{(d-q'+p',0)}$ in the last equality. Thus,
    \begin{equation}
        \bra{\overline{i'}} \mathsf{\tilde{S}}_d^{(p',q')} \ket{\overline{j'}} = \bra{\overline{i' \ominus p'}} \mathsf{\tilde{S}}_d^{(0,q'-p')} \ket{\overline{j' \ominus p'}} = \bra{\overline{i' \ominus q'}} \mathsf{\tilde{S}}_d^{(0,d-q'+p')} \ket{\overline{j' \ominus q'}}
    \end{equation}
    Note that either $q'-p' \in \left\{1,\cdots,\frac{d-1}{2}\right\}$ or $d - q' + p' \in \left\{1,\cdots,\frac{d-1}{2}\right\}$. By Eq. (\ref{C1 intermediate step 1}), we are thus guaranteed that $\bra{\overline{i'}} \mathsf{\tilde{S}}_d^{(p',q')} \ket{\overline{j'}} = 0$.

    \vskip 16pt
    
    \paragraph{Condition KL2}
    Fix $i',j' \in \mathbf{Z}_d$ and $0 \leq p' < q' \leq d-1$. By Condition KL1, we have $\bra{\overline{i'}} \mathsf{\tilde{S}}^{(p',q')}_d \ket{\overline{j'}} = 0$. Observe that by Lemma 4,
    \begin{align}
        \bra{\overline{i'}} \mathsf{\tilde{A}}^{(p',q')}_d \ket{\overline{j'}} &= \frac{1}{\Im\left(\zeta_d^{q'-p'}\right)} \left(\Re\left(\zeta_d^{q'-p'}\right) \bra{\overline{i'}} \mathsf{\tilde{S}}^{(p',q')}_d \ket{\overline{j'}} - \bra{\overline{i'}} \mathsf{\overline{Z}}_d^\dagger \mathsf{\tilde{S}}^{(p',q')}_d \mathsf{\overline{Z}}_d \ket{\overline{j'}} \right)\\
        &= \frac{1}{\Im\left(\zeta_d^{q'-p'}\right)} \left(\Re\left(\zeta_d^{q'-p'}\right) \bra{\overline{i'}} \mathsf{\tilde{S}}^{(p',q')}_d \ket{\overline{j'}} - \zeta_d^{j'-i'}\bra{\overline{i'}} \mathsf{\tilde{S}}^{(p',q')}_d \ket{\overline{j'}} \right)\\
        &= 0.
    \end{align}
    Note that $\Im\left(\zeta_d^{q'-p'}\right) \neq 0$ since $d$ is odd and $p' \neq q'$.

    \vskip 16pt

    \paragraph{Condition KL3}
    We first show that $\bra{\overline{i}} \mathsf{\tilde{D}}_d^{(\ell)} \ket{\overline{j}} = 0$ for all $i,j \in \mathbf{Z}_d, i \neq j$ and $\ell \in \left\{0,\cdots,d-2\right\}$. Fix $i',j' \in \mathbf{Z}_d, i' \neq j'$ and $\ell' \in \left\{0,\cdots,d-2\right\}$. Note that $\mathsf{\overline{Z}}_d \mathsf{\tilde{D}}_d^{(\ell')} \ket{\overline{j'}} = \zeta_d^{j'} \mathsf{\tilde{D}}_d^{(\ell')} \ket{\overline{j'}}$. Since $\zeta_d^{i'} \neq \zeta_d^{j'}$, we can use Lemma S1 to deduce that $\bra{\overline{i'}} \mathsf{\tilde{D}}_d^{(\ell')} \ket{\overline{j'}} = 0$.

    \vskip 16pt

    Now, it remains to prove that Hypothesis C3 implies that $\bra{\overline{k}} \mathsf{\tilde{D}}_d^{(\ell)} \ket{\overline{k}} = c_{(\mathsf{I},\mathsf{\tilde{D}}_d^{(\ell)})}$ for all $k \in \mathbf{Z}_d$ and $\ell \in \{0,\cdots,d-2\}$. Fix $k' \in \mathbf{Z}_d$ and $\ell' \in \{0,\cdots,d-2\}$. Using Lemma 4, we obtain
    \begin{equation}
        \bra{\overline{k'}} \mathsf{\tilde{D}}_d^{(\ell')} \ket{\overline{k'}} = \bra{\overline{k'}} \left(\mathsf{\overline{X}}_d^\dagger\right)^{d-2-\ell'} \mathsf{\tilde{D}}_d^{(d-2)} \mathsf{\overline{X}}_d^{d-2-\ell'}\ket{\overline{k'}} = \bra{\overline{(d-2-\ell') \oplus k'}} \mathsf{\tilde{D}}_d^{(d-2)} \ket{\overline{(d-2-\ell') \oplus k'}} = c_{(\mathsf{I}, \mathsf{\tilde{D}}^{(d-2)}_d)}.
    \end{equation}

    \vskip 16pt

    \paragraph{Condition KL4}
    This is identical to Hypothesis C2.

    \vskip 16pt

    \paragraph{Condition KL5}
    Fix $i',j' \in \mathbf{Z}_d, \tilde{\Lambda}' \in \left\{ \mathsf{\tilde{S}}^{(p,q)}_d, \mathsf{\tilde{A}}^{(p,q)}_d \mid 0 \leq p < q \leq d-1 \right\}, \ell' \in \{0,\cdots,d-2\}$. Observe that $\mathsf{\overline{Z}}_d \mathsf{\tilde{D}}_d^{\ell'} \ket{\overline{j'}} = \zeta_d^{j'} \mathsf{\tilde{D}}_d^{(\ell')} \ket{\overline{j'}}$, from which we can deduce by Lemma \ref{orthogonality of Zd eigenspaces} that
    \begin{equation}
        \bra{\overline{k}} \mathsf{\tilde{D}}_d^{(\ell')} \ket{\overline{j'}} = 0 \iff k = j'.
    \end{equation}
    We proceed by using the completeness relation:
    \begin{equation}
    	\label{proof of KL5}
        \bra{\overline{i'}} \tilde{\Lambda}' \mathsf{\tilde{D}}_d^{(\ell')} \ket{\overline{j'}} = \bra{\overline{i'}} \tilde{\Lambda}' \left(\sum_{k \in \mathbf{Z}_d} \ket{\overline{k}}\bra{\overline{k}}\right) \mathsf{\tilde{D}}_d^{(\ell')} \ket{\overline{j'}} = \bra{\overline{i'}} \tilde{\Lambda}' \ket{\overline{j'}} \bra{\overline{j'}} \mathsf{\tilde{D}}^{(\ell')}_d \ket{\overline{j'}} = 0,
    \end{equation}
    where the last line follows from the form of Conditions KL1 and KL2 proven above, i.e. $\bra{\overline{i}} \tilde{\Lambda} \ket{\overline{j}} = 0$ for all $i,j \in \mathbf{Z}_d$ and $\tilde{\Lambda} \in \left\{ \mathsf{\tilde{S}}^{(p,q)}_d, \mathsf{\tilde{A}}^{(p,q)}_d \mid 0 \leq p < q \leq d-1 \right\}$.

    \vskip 16pt

    \paragraph{Condition KL6}
    The proof for this condition is very similar to that for Condition KL5. Notice that, even after swapping the order of $\tilde{\Lambda}'$ and $\mathsf{\tilde{D}}_d^{(\ell')}$ in Eq. (\ref{proof of KL5}), we still get 0.

    \vskip 16pt
    
    \paragraph{Condition KL7}
    First, we show that $\bra{\overline{i}} \mathsf{\tilde{D}}_d^{(p)} \mathsf{\tilde{D}}_d^{(q)} \ket{\overline{j}} = 0$ for all $i,j\in\mathbf{Z}_d$ and $p,q \in \{0,\cdots,d-2\}$. Fix $i',j' \in \mathbf{Z}_d, i \neq j$ and $p',q' \in \{0,\cdots,d-2\}$. Noting that $\mathsf{\overline{Z}}_d \mathsf{\tilde{D}}_d^{(p')} \ket{\overline{i'}} = \zeta_d^{i'} \mathsf{\tilde{D}}_d^{(p')} \ket{\overline{i'}}$ and $\mathsf{\overline{Z}}_d \mathsf{\tilde{D}}_d^{(q')} \ket{\overline{j'}} = \zeta_d^{j'} \mathsf{\tilde{D}}_d^{(q')} \ket{\overline{j'}}$. Since $i' \neq j'$, it follows from Lemma \ref{orthogonality of Zd eigenspaces} that
    \begin{equation}
        \bra{\overline{i'}} \mathsf{\tilde{D}}_d^{(p')} \mathsf{\tilde{D}}_d^{(q')} \ket{\overline{j'}} = 0.
    \end{equation}

    \vskip 16pt
    
    Next, we prove that $\bra{\overline{k}} \mathsf{\tilde{D}}_d^{(p)} \mathsf{\tilde{D}}_d^{(q)} \ket{\overline{k}} = 0$ for all $k \in \mathbf{Z}_d, 0 \leq p \leq q \leq d-2$. Fix $k' \in \mathbf{Z}_d, 0 \leq p' \leq q' \leq d-2$. By Lemma 4, we have
    \begin{align}
        \mathsf{\tilde{D}}_d^{(p')} \mathsf{\tilde{D}}_d^{(q')} &= \left(\mathsf{\overline{X}}_d^\dagger\right)^{d-2-q'} \mathsf{\tilde{D}}_d^{(d-2-q'+p')} \mathsf{\overline{X}}_d^{d-2-q'} \left(\mathsf{\overline{X}}_d^\dagger\right)^{d-2-q'} \mathsf{\tilde{D}}_d^{(d-2)} \mathsf{\overline{X}}_d^{d-2-q'}\\
        &= \left(\mathsf{\overline{X}}_d^\dagger\right)^{d-2-q'}\mathsf{\tilde{D}}_d^{(d-2-q'+p')} \mathsf{\tilde{D}}_d^{(d-2)} \mathsf{\overline{X}}_d^{d-2-q'}.
    \end{align}
    We then calculate
    \begin{equation}
        \bra{\overline{k'}} \mathsf{\tilde{D}}_d^{(p')} \mathsf{\tilde{D}}_d^{(q')} \ket{\overline{k'}} = \bra{\overline{k' \oplus (d-2-q')}} \mathsf{\tilde{D}}_d^{(d-2-q'+p')} \mathsf{\tilde{D}}_d^{(d-2)} \ket{\overline{k' \oplus (d-2-q')}} = 0,
    \end{equation}
    where the last equality follows from Hypothesis C4. Note that $d-2 - q' + p' \geq 0$.

    \vskip 16pt
	
    Finally, observe that $\mathsf{\tilde{D}}_d^{(p)} \mathsf{\tilde{D}}_d^{(q)} = \mathsf{\tilde{D}}_d^{(q)} \mathsf{\tilde{D}}_d^{(p)}$ for all $0 \leq p \leq q \leq d-2$. This implies that if $\bra{\overline{k}} \mathsf{\tilde{D}}_d^{(p)} \mathsf{\tilde{D}}_d^{(q)} \ket{\overline{k}} = 0$ for all $k \in \mathbf{Z}_d, 0 \leq p \leq q \leq d-2$, then
    \begin{equation}
        \bra{\overline{k}} \mathsf{\tilde{D}}_d^{(p)} \mathsf{\tilde{D}}_d^{(q)} \ket{\overline{k}} = 0
    \end{equation}
    for all $k \in \mathbf{Z}_d, p,q \in \{0,\cdots,d-2\}$.
\end{proof}

\section{Proof of Theorem 6}
We first prove the following lemma:
\renewcommand{\thetheorem}{S3}
\begin{lemma}
    Let $0 \leq p < q \leq d-1$ and $0 \leq r < s \leq d-1$. Let $\Gamma, \Upsilon \in \{\mathsf{S},\mathsf{A}\}$. Fix $\nu \in \mathbf{Z}_d$ and let $\mathbf{u} \in \mathcal{U}_{d,N,\nu}$.
    \begin{enumerate}[\rm (i),itemsep=2pt]
        \item If $p \neq q$ and $r \neq s$, then
        \begin{equation}
            \tilde{\Gamma}_d^{(p,q)}\tilde{\Upsilon}_d^{(r,s)} \ket{S_\mathbf{u}} \in \mathrm{span}_{\mathbf{C}\backslash\{0\}} \left\{ \ket{S_{\mathbf{a}^{}}}, \ket{S_{\mathbf{b}^{}}}, \ket{S_{\mathbf{c}^{}}}, \ket{S_{\mathbf{d}^{}}} \right\}
        \end{equation}
        with $\mathbf{a}, \mathbf{b}, \mathbf{c}, \mathbf{d}$ defined as follows:
        \begin{itemize}[itemsep=2pt]
            \item $a_p = u_p + 1$, $a_q = u_q - 1$, $a_r = u_r + 1$, $a_s = u_s - 1$, and $a_i = u_i$ for all $i \in \mathbf{Z}_d \backslash \{p,q,r,s\}$
            \item $b_p = u_p - 1$, $b_q = u_q + 1$, $b_r = u_r + 1$, $b_s = u_s - 1$, and $b_i = u_i$ for all $i \in \mathbf{Z}_d \backslash \{p,q,r,s\}$
            \item $c_p = u_p + 1$, $c_q = u_q - 1$, $c_r = u_r - 1$, $c_s = u_s + 1$, and $c_i = u_i$ for all $i \in \mathbf{Z}_d \backslash \{p,q,r,s\}$
            \item $d_p = u_p - 1$, $d_q = u_q + 1$, $d_r = u_r - 1$, $d_s = u_s + 1$, and $d_i = u_i$ for all $i \in \mathbf{Z}_d \backslash \{p,q,r,s\}$
        \end{itemize}

        \item If $p \neq r$ and $q = s$, then
        \begin{equation}
            \tilde{\Gamma}_d^{(p,q)}\tilde{\Upsilon}_d^{(r,s)} \ket{S_\mathbf{u}} \in \mathrm{span}_{\mathbf{C}\backslash\{0\}} \left\{ \ket{S_{\mathbf{a}^{}}}, \ket{S_{\mathbf{b}^{}}}, \ket{S_{\mathbf{c}^{}}}, \ket{S_{\mathbf{d}^{}}} \right\}
        \end{equation}
        with $\mathbf{a}, \mathbf{b}, \mathbf{c}, \mathbf{d}$ defined as follows:
        \begin{itemize}[itemsep=2pt]
            \item $a_p = u_p + 1$, $a_q = u_q - 2$, $a_r = u_r + 1$, $a_i = u_i$ for all $i \in \mathbf{Z}_d \backslash \{p,q,r\}$
            \item $b_p = u_p + 1$, $b_q = u_q$, $b_r = u_r - 1$, $b_i = u_i$ for all $i \in \mathbf{Z}_d \backslash \{p,q,r\}$
            \item $c_p = u_p - 1$, $c_q = u_q$, $c_r = u_r + 1$, $c_i = u_i$ for all $i \in \mathbf{Z}_d \backslash \{p,q,r\}$
            \item $d_p = u_p - 1$, $d_q = u_q + 2$, $d_r = u_r - 1$, $d_i = u_i$ for all $i \in \mathbf{Z}_d \backslash \{p,q,r\}$
        \end{itemize}
        
        \item If $p = r$ and $q \neq s$, then
        \begin{equation}
            \tilde{\Gamma}_d^{(p,q)}\tilde{\Upsilon}_d^{(r,s)} \ket{S_\mathbf{u}} \in \mathrm{span}_{\mathbf{C}\backslash\{0\}} \left\{ \ket{S_{\mathbf{a}^{}}}, \ket{S_{\mathbf{b}^{}}}, \ket{S_{\mathbf{c}^{}}}, \ket{S_{\mathbf{d}^{}}} \right\}
        \end{equation}
        with $\mathbf{a}, \mathbf{b}, \mathbf{c}, \mathbf{d}$ defined as follows:
        \begin{itemize}[itemsep=2pt]
            \item $a_p = u_p + 1$, $a_q = u_q - 2$, $a_s = u_s + 1$, $a_i = u_i$ for all $i \in \mathbf{Z}_d \backslash \{p,q,s\}$
            \item $b_p = u_p + 1$, $b_q = u_q$, $b_s = u_s - 1$, $b_i = u_i$ for all $i \in \mathbf{Z}_d \backslash \{p,q,s\}$
            \item $c_p = u_p - 1$, $c_q = u_q$, $c_s = u_s + 1$, $c_i = u_i$ for all $i \in \mathbf{Z}_d \backslash \{p,q,s\}$
            \item $d_p = u_p - 1$, $d_q = u_q + 2$, $d_s = u_s - 1$, $d_i = u_i$ for all $i \in \mathbf{Z}_d \backslash \{p,q,s\}$
        \end{itemize}
    \end{enumerate}
\end{lemma}

\begin{proof}
    This follows from two applications of Lemma S2.
\end{proof}

\vskip 16pt

We then proceed with proving Theorem 6:
\renewcommand{\thetheorem}{6}
\begin{theorem}
    If a code $\mathcal{C} \colon \mathbf{C}^d \longrightarrow \mathrm{Sym}^N(\mathbf{C}^d)$ is sparse, then $\mathcal{C}$ satisfies Condition C1 and Condition C2 with $(p,q) \neq (r,s)$.
\end{theorem}

\begin{proof}
    We prove the contrapositive. There are two cases to consider: Condition C1 is not satisfied or Condition C2 with $(p,q) \neq (r,s)$ is not satisfied.

    \vskip 16pt
    
    \paragraph{Condition C1}
    Suppose Condition C1 is not satisfied, i.e. for some $i',j' \in \mathbf{Z}_d$ and some $n' \in \{1,\cdots,\frac{d-1}{2}\}$,
    \begin{equation}
        \bra{\overline{i'}} \mathsf{\tilde{S}}_d^{(0,n')} \ket{\overline{j'}} \neq 0.
    \end{equation}
    This implies that there exists $\mathbf{u}, \mathbf{u}' \in \mathcal{U}_{d,N,0}$ such that $\alpha_\mathbf{u}, \alpha_{\mathbf{u}'} \neq 0$ and
    \begin{equation}
        \bra{s_{u_{i'},\cdots,u_{d-1},u_0,\cdots,u_{i'-1}}} \mathsf{\tilde{S}}_d^{(0,n')} \ket{s_{u'_{j'},\cdots,u'_{d-1},u'_0,\cdots,u'_{j'-1}}} \neq 0,
    \end{equation}
    whence it follows by Lemma S2 that one of the following is true:
    \begin{itemize}[itemsep=2pt]
        \item $\braket{s_{u_{i'},\cdots,u_{d-1},u_0,\cdots,u_{i'-1}}|s_{u'_{j'}+1,\cdots,u'_{j'+n'}-1,\cdots,u'_{j'-1}}} \neq 0$
        \item $\braket{s_{u_{i'},\cdots,u_{d-1},u_0,\cdots,u_{i'-1}}|s_{u'_{j'}-1,\cdots,u'_{j'+n'}+1,\cdots,u'_{j'-1}}} \neq 0$
    \end{itemize}
    Either way, we can set $\delta' = j'-i'$ and note that  \begin{equation}
        \sum_{\xi \in \mathbf{Z}_d} |u_\xi - u'_{\xi+\delta'}| = 2 \leq 4.
    \end{equation}
    Thus, $\mathcal{C}$ is not sparse.

    \vskip 16pt
    
    \paragraph{Condition C2 with $(p,q) \neq (r,s)$}
    Suppose that, for some $i',j' \in \mathbf{Z}_d$ and some $0 \leq p' < q' \leq d-1, 0 \leq r' < s' \leq d-1$ with $(p',q') \neq (r',s')$,
    \begin{equation}
        \bra{\overline{i'}} \tilde{\Gamma}_d^{(p',q')} \tilde{\Upsilon}_d^{(r',s')} \ket{\overline{j'}} \neq 0.
    \end{equation}
   This implies that there exists $\mathbf{u}, \mathbf{u}' \in \mathcal{Z}_{d,N,0}$ such that $\alpha_\mathbf{u}, \alpha_{\mathbf{u}'} \neq 0$ and
    \begin{equation}
        \bra{s_{u_{i'},\cdots,u_{d-1},u_0,\cdots,u_{i'-1}}} \tilde{\Gamma}_d^{(p',q')} \tilde{\Upsilon}_d^{(r',s')} \ket{s_{u'_{j'},\cdots,u'_{d-1},u'_0,\cdots,u'_{j'-1}}} \neq 0,
    \end{equation}
    By Lemma S3, we see that for $\delta' = j' - i'$,
    \begin{equation}
        \sum_{\xi \in \mathbf{Z}_d} |u_\xi - u'_{\xi+\delta'}| \in \{2,4\},
    \end{equation}
    whence we conclude that $\mathcal{C}$ is not sparse.
\end{proof}

\section{Proof of Theorem 7}
\renewcommand{\thetheorem}{7}
\begin{theorem}
	\label{reduced KL conditions for DPI codes}
    Let $\mathcal{C} \colon \mathbf{C}^d \longrightarrow \mathrm{DSym}^N(\mathbf{C}^d)$ be a doubly permutation-invariant code. If $\mathcal{C}$ satisfies
    \begin{enumerate}[\rm (QF1),leftmargin=3em,itemsep=2pt]
        \item $\bra{\overline{d-1}} \mathsf{\tilde{D}}_d^{(d-2)} \ket{\overline{d-1}} = 0$
        \item $\bra{\overline{0}} \mathsf{\tilde{D}}_d^{(d-2)}\mathsf{\tilde{D}}_d^{(d-2)} \ket{\overline{0}} = \bra{\overline{d-1}} \mathsf{\tilde{D}}_d^{(d-2)}\mathsf{\tilde{D}}_d^{(d-2)} \ket{\overline{d-1}}$
    \end{enumerate}
    then it satisfies Condition C3 and Condition C4.
\end{theorem}

\begin{proof}
    We prove each condition systematically.

    \vskip 6pt
    
    \paragraph{Condition C3}
    We first show that $\bra{\overline{0}} \mathsf{\tilde{D}}_d^{(d-2)} \ket{\overline{0}} = 0$. Fix an injection $f \colon \mathcal{W}_{d,N}/\hat{\mathcal{R}} \longrightarrow \mathcal{W}_{d,N}$ that maps each equivalence class to a chosen representative. Then
    \begin{align}
        \bra{\overline{0}} \mathsf{\tilde{D}}_d^{(d-2)} \ket{\overline{0}} &= \sum_{\mathbf{w} \in f(\mathcal{W}_{d,N}/\hat{\mathcal{R}})} |\alpha_\mathbf{w}|^2 \sum_{\mathbf{w}' \in [\mathbf{w}]_{\hat{\mathcal{R}}}^{}} \bra{S_{\mathbf{w}'}} \mathsf{\tilde{D}}_d^{(d-2)} \ket{S_{\mathbf{w}'}}\\
        &= \sum_{\mathbf{w} \in f(\mathcal{W}_{d,N}/\hat{\mathcal{R}})} |\alpha_\mathbf{w}|^2 \sum_{\mathbf{w}' \in [\mathbf{w}]_{\hat{\mathcal{R}}}^{}} \left(w'_{d-2} - w'_{d-1}\right) \label{2nd last 1}\\
        &= 0 \label{last 1},
    \end{align}
    where going Eq. (\ref{2nd last 1}) to Eq. (\ref{last 1}) requires the observation that, for each $\mathbf{w}' \in [\mathbf{w}]_{\hat{\mathcal{R}}}^{}$, there exists a unique element $\mathbf{w}'' \in [\mathbf{w}]_{\hat{\mathcal{R}}}^{}$ such that $w'_{d-2} = w''_{d-1}$, $w'_{d-1} = w''_{d-2}$, and $w'_i = w''_i$ for all $i \in \{0,\cdots,d-3\}$. [\textit{It is in arguments like this that we see the second form of permutation invariance in action!}]

    \vskip 16pt

    Next, we combine the above argument with Heisenberg-Weyl symmetry to show that $\bra{\overline{k}} \mathsf{\tilde{D}}_d^{(d-2)} \ket{\overline{k}} = 0$ for all $k \in \{1,\cdots,d-3\}$. Fix $k' \in \{1,\cdots,d-3\}$ and an injection $f \colon \mathcal{W}_{d,N}/\hat{\mathcal{R}} \longrightarrow \mathcal{W}_{d,N}$ that maps each equivalence class to a chosen representative. Notice that
    \begin{align}
        \bra{\overline{k'}} \mathsf{\tilde{D}}_d^{(d-2)} \ket{\overline{k'}} &= \bra{\overline{0}} \left(\mathsf{\overline{X}}_d^\dagger\right)^{k'} \mathsf{\tilde{D}}_d^{(d-2)} \mathsf{\overline{X}}_d^{k'} \ket{\overline{0}}\\
        &= \bra{\overline{0}} \mathsf{\tilde{D}}_d^{(d-2-k')} \ket{\overline{0}}\\
        &= \sum_{\mathbf{w} \in f(\mathcal{W}_{d,N}/\hat{\mathcal{R}})} |\alpha_\mathbf{w}|^2 \sum_{\mathbf{w}' \in [\mathbf{w}]_{\hat{\mathcal{R}}}^{}} \bra{S_{\mathbf{w}'}} \mathsf{\tilde{D}}_d^{(d-2-k')} \ket{S_{\mathbf{w}'}}\\
        &= \sum_{\mathbf{w} \in f(\mathcal{W}_{d,N}/\hat{\mathcal{R}})} |\alpha_\mathbf{w}|^2 \sum_{\mathbf{w}' \in [\mathbf{w}]_{\hat{\mathcal{R}}}^{}} \left(w'_{d-2-k'} - w'_{d-1-k'}\right)\\
        &= 0.
    \end{align}

    \vskip 8pt

    Finally, it remains to show that $\bra{\overline{d-2}} \mathsf{\tilde{D}}_d^{(d-2)} \ket{\overline{d-2}} = -\bra{\overline{d-1}} \mathsf{\tilde{D}}_d^{(d-2)} \ket{\overline{d-1}}$. Fix an injection $f \colon \mathcal{W}_{d,N}/\hat{\mathcal{R}} \longrightarrow \mathcal{W}_{d,N}$ that maps each equivalence class to a chosen representative. Observe that
    \begin{align}
        \bra{\overline{d-2}} \mathsf{\tilde{D}}_d^{(d-2)} \ket{\overline{d-2}} &= \sum_{\mathbf{w} \in f(\mathcal{W}_{d,N}/\hat{\mathcal{R}})} |\alpha_\mathbf{w}|^2 \sum_{\mathbf{w}' \in [\mathbf{w}]_{\hat{\mathcal{R}}}^{}} \left(w'_{0} - w'_{1}\right)\\
        &= -\sum_{\mathbf{w} \in f(\mathcal{W}_{d,N}/\hat{\mathcal{R}})} |\alpha_\mathbf{w}|^2 \sum_{\mathbf{w}'' \in [\mathbf{w}]_{\hat{\mathcal{R}}}^{}} \left(w''_{d-1} - w''_{0}\right)\\
        &= -\bra{\overline{d-1}} \mathsf{\tilde{D}}_d^{(d-2)} \ket{\overline{d-1}},
    \end{align}
    where we have used the fact that, for each $\mathbf{w}' \in [\mathbf{w}]_{\hat{\mathcal{R}}}^{}$, there exists a unique $\mathbf{w}'' \in [\mathbf{w}]_{\hat{\mathcal{R}}}^{}$ such that $w'_1 = w''_{d-1}$, $w'_{d-1} = w''_1$, and $w'_i = w''_i$ for all $i \in \{0,2,3,\cdots,d-2\}$.

    \vskip 16pt

    \paragraph{Condition C4}
    First, fix $k' \in \{0,\cdots,d-3\}$, and observe that
    \begin{align}
        \bra{\overline{k'}} \mathsf{\tilde{D}}_d^{(d-2)} \mathsf{\tilde{D}}_d^{(d-2)} \ket{\overline{k'}} &= \bra{\overline{0}} \mathsf{\tilde{D}}_d^{(d-2-k')} \mathsf{\tilde{D}}_d^{(d-2-k')} \ket{\overline{0}}\\
        &= \sum_{\mathbf{w} \in f(\mathcal{W}_{d,N}/\hat{\mathcal{R}})} |\alpha_\mathbf{w}|^2 \sum_{\mathbf{w}' \in [\mathbf{w}]_{\hat{\mathcal{R}}}^{}} \bra{S_{\mathbf{w}'}} \mathsf{\tilde{D}}_d^{(d-2-k')} \mathsf{\tilde{D}}_d^{(d-2-k')} \ket{S_{\mathbf{w}'}}\\
        &= \sum_{\mathbf{w} \in f(\mathcal{W}_{d,N}/\hat{\mathcal{R}})} |\alpha_\mathbf{w}|^2 \sum_{\mathbf{w}' \in [\mathbf{w}]_{\hat{\mathcal{R}}}^{}} (w'_{d-2-k'} - w'_{d-1-k'})^2\\
        &= \sum_{\mathbf{w} \in f(\mathcal{W}_{d,N}/\hat{\mathcal{R}})} |\alpha_\mathbf{w}|^2 \sum_{\mathbf{w}'' \in [\mathbf{w}]_{\hat{\mathcal{R}}}^{}} (w''_{d-2} - w''_{d-1})^2\\
        &= \bra{\overline{0}} \mathsf{\tilde{D}}_d^{(d-2)} \mathsf{\tilde{D}}_d^{(d-2)} \ket{\overline{0}}\\
        \nonumber\\
        \bra{\overline{d-2}} \mathsf{\tilde{D}}_d^{(d-2)} \mathsf{\tilde{D}}_d^{(d-2)} \ket{\overline{d-2}} &= \bra{\overline{0}} \mathsf{\tilde{D}}_d^{(0)} \mathsf{\tilde{D}}_d^{(0)} \ket{\overline{0}}\\
        &= \sum_{\mathbf{w} \in f(\mathcal{W}_{d,N}/\hat{\mathcal{R}})} |\alpha_\mathbf{w}|^2 \sum_{\mathbf{w}' \in [\mathbf{w}]_{\hat{\mathcal{R}}}^{}} (w'_0-w'_1)^2\\
        &= \sum_{\mathbf{w} \in f(\mathcal{W}_{d,N}/\hat{\mathcal{R}})} |\alpha_\mathbf{w}|^2 \sum_{\mathbf{w}'' \in [\mathbf{w}]_{\hat{\mathcal{R}}}^{}} (w''_{d-1}-w''_0)^2\\
        &= \bra{\overline{d-1}} \mathsf{\tilde{D}}_d^{(d-2)} \mathsf{\tilde{D}}_d^{(d-2)} \ket{\overline{d-1}}.
    \end{align}
    Together with Hypothesis QF2, these imply that, for all $k \in \mathbf{Z}_d$, $\bra{\overline{k}} \mathsf{\tilde{D}}_d^{(d-2)} \mathsf{\tilde{D}}_d^{(d-2)} \ket{\overline{k}} = c_{(\mathsf{\tilde{D}}_d^{(d-2)}, \mathsf{\tilde{D}}_d^{(d-2)})}$ for some $c_{(\mathsf{\tilde{D}}_d^{(d-2)}, \mathsf{\tilde{D}}_d^{(d-2)})}$.

    \vskip 16pt

    Now, fix $\ell' \in \{0,\cdots,d-4\}$ and $m' \in \mathbf{Z}_d$. Observe that
    \begin{align}
        \bra{\overline{m'}} \mathsf{\tilde{D}}_d^{(\ell')} \mathsf{\tilde{D}}_d^{(d-2)} \ket{\overline{m'}} &= \bra{\overline{0}} \mathsf{\tilde{D}}_d^{(\ell' \ominus m')} \mathsf{\tilde{D}}_d^{((d-2) \ominus m')} \ket{\overline{0}}\\
        &= \sum_{\mathbf{w} \in f(\mathcal{W}_{d,N}/\hat{\mathcal{R}})} |\alpha_\mathbf{w}|^2 \sum_{\mathbf{w}' \in [\mathbf{w}]_{\hat{\mathcal{R}}}^{}} (w'_{\ell' \ominus m'} - w'_{(\ell'+1) \ominus m'})(w'_{(d-2) \ominus m'} - w'_{(d-1) \ominus m'})\\
        &= 0
    \end{align}
    since for each $\mathbf{w}' \in [\mathbf{w}]_{\hat{\mathcal{R}}}^{}$, there is either of the following:
    \begin{itemize}[itemsep=2pt]
        \item a unique $\mathbf{w}'' \in [\mathbf{w}]_{\hat{\mathcal{R}}}^{}$ such that $w'_{\ell' \ominus m'} = w''_{(\ell'+1) \ominus m'}$, $w'_{(\ell'+1) \ominus m'} = w''_{\ell' \ominus m'}$, and $w'_i = w''_i$ for all $i \in \mathbf{Z}_d \backslash \{\ell' \ominus m', (\ell'+1) \ominus m'\}$
        \item a unique $\mathbf{w}''' \in [\mathbf{w}]_{\hat{\mathcal{R}}}^{}$ such that $w'_{(d-2) \ominus m'} = w'''_{(d-1) \ominus m'}$, $w'_{(d-1) \ominus m'} = w'''_{(d-2) \ominus m'}$, and $w'_i = w'''_i$ for all $i \in \mathbf{Z}_d \backslash \{(d-2) \ominus m', (d-1) \ominus m'\}$
    \end{itemize}

    \vskip 16pt

    Finally, fix $n' \in \mathbf{Z}_d$ and observe that
    \begin{align}
        \bra{\overline{n'}} \mathsf{\tilde{D}}_d^{(d-3)} \mathsf{\tilde{D}}_d^{(d-2)} \ket{\overline{n'}} &= \bra{\overline{0}} \mathsf{\tilde{D}}_d^{((d-3) \ominus n')} \mathsf{\tilde{D}}_d^{((d-2) \ominus n')} \ket{\overline{0}}\\
        &= \sum_{\mathbf{w} \in f(\mathcal{W}_{d,N}/\hat{\mathcal{R}})} |\alpha_\mathbf{w}|^2 \sum_{\mathbf{w}' \in [\mathbf{w}]_{\hat{\mathcal{R}}}^{}} (w'_{(d-3) \ominus n'} - w'_{(d-2) \ominus n'})(w'_{(d-2) \ominus n'} - w'_{(d-1) \ominus n'})\\
        &= 0
    \end{align}
    since for each $\mathbf{w}' \in [\mathbf{w}]_{\hat{\mathcal{R}}}^{}$, there is either of the following:
    \begin{itemize}[itemsep=2pt]
        \item a unique $\mathbf{w}'' \in [\mathbf{w}]_{\hat{\mathcal{R}}}^{}$ such that $w'_{(d-3) \ominus n'} = w''_{(d-1) \ominus n'}$, $w'_{(d-1) \ominus n'} = w''_{(d-3) \ominus n'}$, and $w'_i = w''_i$ for all $i \in \mathbf{Z}_d \backslash \{(d-3) \ominus n', (d-1) \ominus n'\}$
        \item a unique $\mathbf{w}''' \in [\mathbf{w}]_{\hat{\mathcal{R}}}^{}$ such that $w'_{(d-3) \ominus n'} = w'''_{(d-2) \ominus n'}$, $w'_{(d-2) \ominus n'} = w'''_{(d-3) \ominus n'}$, and $w'_i = w'''_i$ for all $i \in \mathbf{Z}_d \backslash \{(d-3) \ominus n', (d-2) \ominus n'\}$
        \item a unique $\mathbf{w}'''' \in [\mathbf{w}]_{\hat{\mathcal{R}}}^{}$ such that $w'_{(d-2) \ominus n'} = w''''_{(d-1) \ominus n'}$, $w'_{(d-1) \ominus n'} = w''''_{(d-2) \ominus n'}$, and $w'_i = w''''_i$ for all $i \in \mathbf{Z}_d \backslash \{(d-2) \ominus n', (d-1) \ominus n'\}$
    \end{itemize}
\end{proof}

\section{Proof of Theorem 8}
\renewcommand{\thetheorem}{8}
\begin{theorem}
	Let $\mathcal{C} \colon \mathbf{C}^d \longrightarrow \mathrm{DSym}^N(\mathbf{C}^d)$ be an SDPI code. If $\mathcal{C}$ satisfies
    \begin{enumerate}[\rm (QF3),leftmargin=3em,itemsep=2pt]
        \item $\bra{\overline{0}} \mathsf{\tilde{S}}_d^{(0,1)}\mathsf{\tilde{S}}_d^{(0,1)} \ket{\overline{0}} = \bra{\overline{d-1}} \mathsf{\tilde{S}}_d^{(0,1)}\mathsf{\tilde{S}}_d^{(0,1)} \ket{\overline{d-1}}$
    \end{enumerate}
    then it satisfies Condition C2 with $(p,q) = (r,s)$.
\end{theorem}

\begin{proof}
    We first show that $\bra{\overline{0}} \mathsf{\tilde{S}}_d^{(0,1)}\mathsf{\tilde{S}}_d^{(0,1)} \ket{\overline{0}} = \bra{\overline{1}} \mathsf{\tilde{S}}_d^{(0,1)}\mathsf{\tilde{S}}_d^{(0,1)} \ket{\overline{1}}$. Using Heisenberg-Weyl symmetry, we have
    \begin{equation}
        \bra{\overline{1}} \mathsf{\tilde{S}}_d^{(0,1)}\mathsf{\tilde{S}}_d^{(0,1)} \ket{\overline{1}} = \bra{\overline{0}} \mathsf{\overline{X}}_d^\dagger \mathsf{\tilde{S}}_d^{(0,1)} \mathsf{\overline{X}}_d \mathsf{\overline{X}}_d^\dagger \mathsf{\tilde{S}}_d^{(0,1)} \mathsf{\overline{X}}_d \ket{\overline{0}} = \bra{\overline{0}} \mathsf{\tilde{S}}_d^{(0,d-1)}\mathsf{\tilde{S}}_d^{(0,d-1)} \ket{\overline{0}}.
    \end{equation}
    Moreover, observe that, for any given injection $f \colon \mathcal{W}_{d,N}/\hat{\mathcal{R}} \longrightarrow \mathcal{W}_{d,N}$ that maps each equivalence class to a chosen representative, we have
    \begin{align}
        \mathsf{\tilde{S}}_d^{(0,d-1)} \ket{\overline{0}} &= \sum_{\mathbf{w} \in f(\mathcal{W}_{d,N}/\hat{\mathcal{R}})} |\alpha_\mathbf{w}|^2 \sum_{\mathbf{w}' \in [\mathbf{w}]_{\hat{\mathcal{R}}}^{}} [(w'_0+1)w'_{d-1} + w'_0(w'_{d-1}+1)] \ket{S_{\mathbf{w}'}}\\
        &= \sum_{\mathbf{w} \in f(\mathcal{W}_{d,N}/\hat{\mathcal{R}})} |\alpha_\mathbf{w}|^2 \sum_{\mathbf{w}'' \in [\mathbf{w}]_{\hat{\mathcal{R}}}^{}} [(w''_0+1)w''_{1} + w''_0(w''_{1}+1)] \ket{S_{\mathbf{w}''}}\\
        &= \mathsf{\tilde{S}}_d^{(0,1)} \ket{\overline{0}},
    \end{align}
    from which the conclusion follows directly.
    
    \vskip 16pt
    
    Next, we show that $\bra{\overline{k}} \mathsf{\tilde{S}}_d^{(0,1)}\mathsf{\tilde{S}}_d^{(0,1)} \ket{\overline{k}} = \bra{\overline{d-1}} \mathsf{\tilde{S}}_d^{(0,1)}\mathsf{\tilde{S}}_d^{(0,1)} \ket{\overline{d-1}}$ for all $k \in \{2,\cdots,d-1\}$. Fix $k' \in \{2,\cdots,d-2\}$. Note that
    \begin{align}
        \bra{\overline{k'}} \mathsf{\tilde{S}}_d^{(0,1)}\mathsf{\tilde{S}}_d^{(0,1)} \ket{\overline{k'}} &= \bra{\overline{0}} \mathsf{\tilde{S}}_d^{(d-k',d-k'+1)}\mathsf{\tilde{S}}_d^{(d-k',d-k'+1)} \ket{\overline{0}},\\
        \bra{\overline{d-1}} \mathsf{\tilde{S}}_d^{(0,1)}\mathsf{\tilde{S}}_d^{(0,1)} \ket{\overline{d-1}} &= \bra{\overline{0}} \mathsf{\tilde{S}}_d^{(1,2)}\mathsf{\tilde{S}}_d^{(1,2)} \ket{\overline{0}}.
    \end{align}
	Again, for any injection $f \colon \mathcal{W}_{d,N}/\hat{\mathcal{R}} \longrightarrow \mathcal{W}_{d,N}$ that maps each equivalence class to a chosen representative, we have
    \begin{align}
        \mathsf{\tilde{S}}_d^{(1,2)} \ket{\overline{0}} &= \sum_{\mathbf{w} \in f(\mathcal{W}_{d,N}/\hat{\mathcal{R}})} |\alpha_\mathbf{w}|^2 \sum_{\mathbf{w}' \in [\mathbf{w}]_{\hat{\mathcal{R}}}^{}} [(w'_1+1)w'_2 + w'_1(w'_2+1)] \ket{S_{\mathbf{w}'}}\\
        &= \sum_{\mathbf{w} \in f(\mathcal{W}_{d,N}/\hat{\mathcal{R}})} |\alpha_\mathbf{w}|^2 \sum_{\mathbf{w}'' \in [\mathbf{w}]_{\hat{\mathcal{R}}}^{}} [(w''_{d-k'}+1)w''_{d-k'+1} + w''_{d-k'}(w''_{d-k'+1}+1)] \ket{S_{\mathbf{w}''}}\\
        &= \mathsf{\tilde{S}}_d^{(d-k',d-k'+1)} \ket{\overline{0}},
    \end{align}
    as required.

    \vskip 16pt

    Combining the two results above with Hypothesis QF3, we have thus far proven that, for all $k \in \mathbf{Z}_d$, we have $\bra{\overline{k}} \mathsf{\tilde{S}}_d^{(0,1)} \mathsf{\tilde{S}}_d^{(0,1)} \ket{\overline{k}} = c_{(\mathsf{\tilde{S}}_d^{(0,1)},\mathsf{\tilde{S}}_d^{(0,1)})}$ for some $c_{(\mathsf{\tilde{S}}_d^{(0,1)},\mathsf{\tilde{S}}_d^{(0,1)})}$. As we show below, this implies that, for all $k \in \mathbf{Z}_d$ and $0 \leq p < q \leq d-1$, we have $\bra{\overline{k}} \tilde{\Gamma}_d^{(p,q)} \tilde{\Upsilon}_d^{(p,q)} \ket{\overline{k}} = c_{(\tilde{\Gamma}_d^{(p,q)}, \tilde{\Upsilon}_d^{(p,q)})}$ for some $c_{(\tilde{\Gamma}_d^{(p,q)}, \tilde{\Upsilon}_d^{(p,q)})}$. Fix $0 \leq p' < q' \leq d-1$. Observe that $\bra{\overline{p'}} \mathsf{\tilde{S}}_d^{(p',q')} \mathsf{\tilde{S}}_d^{(p',q')} \ket{\overline{p'}} = \bra{\overline{q'}} \mathsf{\tilde{S}}_d^{(p',q')} \mathsf{\tilde{S}}_d^{(p',q')} \ket{\overline{q'}} = \bra{\overline{0}} \mathsf{\tilde{S}}_d^{(0,1)}\mathsf{\tilde{S}}_d^{(0,1)} \ket{\overline{0}}$:
    \begin{align}
        \bra{\overline{q'}} \mathsf{\tilde{S}}_d^{(p',q')} \mathsf{\tilde{S}}_d^{(p',q')} \ket{\overline{q'}} &= \bra{\overline{q'}} \mathsf{\overline{X}}_d^{q'} \mathsf{\tilde{S}}_d^{(0,d+p'-q')} \left(\mathsf{\overline{X}}_d^\dagger\right)^{q'} \mathsf{\overline{X}}_d^{q'} \mathsf{\tilde{S}}_d^{(0,d+p'-q')} \left(\mathsf{\overline{X}}_d^\dagger\right)^{q'} \ket{\overline{q'}}\\
        &= \bra{\overline{0}} \mathsf{\tilde{S}}_d^{(0,d+p'-q')} \mathsf{\tilde{S}}_d^{(0,d+p'-q')} \ket{\overline{0}}\\
        &= \sum_{\mathbf{w} \in f(\mathcal{W}_{d,N}/\hat{\mathcal{R}})} |\alpha_\mathbf{w}|^2 \sum_{\mathbf{w}' \in [\mathbf{w}]_{\hat{\mathcal{R}}}^{}} [(w'_0+1)w'_{d+p'-q'} + w'_0(w'_{d+p'-q'}+1)]^2 \\
        &= \sum_{\mathbf{w} \in f(\mathcal{W}_{d,N}/\hat{\mathcal{R}})} |\alpha_\mathbf{w}|^2 \sum_{\mathbf{w}'' \in [\mathbf{w}]_{\hat{\mathcal{R}}}^{}} [(w''_0+1)w''_{1} + w''_0(w''_{1}+1)]^2\\
        &= \bra{\overline{0}} \mathsf{\tilde{S}}_d^{(0,1)} \mathsf{\tilde{S}}_d^{(0,1)} \ket{\overline{0}},\\
        \nonumber\\
        \bra{\overline{p'}} \mathsf{\tilde{S}}_d^{(p',q')} \mathsf{\tilde{S}}_d^{(p',q')} \ket{\overline{p'}} &= \bra{\overline{p'}} \mathsf{\overline{X}}_d^{p'} \mathsf{\tilde{S}}_d^{(0,q'-p')} \left(\mathsf{\overline{X}}_d^\dagger\right)^{p'} \mathsf{\overline{X}}_d^{p'} \mathsf{\tilde{S}}_d^{(0,q'-p')} \left(\mathsf{\overline{X}}_d^\dagger\right)^{p'} \ket{\overline{p'}}\\
        &= \bra{\overline{0}} \mathsf{\tilde{S}}_d^{(0,q'-p')} \mathsf{\tilde{S}}_d^{(0,q'-p')} \ket{\overline{0}}\\
        &= \sum_{\mathbf{w} \in f(\mathcal{W}_{d,N}/\hat{\mathcal{R}})} |\alpha_\mathbf{w}|^2 \sum_{\mathbf{w}' \in [\mathbf{w}]_{\hat{\mathcal{R}}}^{}} [(w'_0+1)w'_{q'-p'} + w'_0(w'_{q'-p'}+1)]^2\\
        &= \sum_{\mathbf{w} \in f(\mathcal{W}_{d,N}/\hat{\mathcal{R}})} |\alpha_\mathbf{w}|^2 \sum_{\mathbf{w}'' \in [\mathbf{w}]_{\hat{\mathcal{R}}}^{}} [(w''_0+1)w''_{1} + w''_0(w''_{1}+1)]^2\\
        &= \bra{\overline{0}} \mathsf{\tilde{S}}_d^{(0,1)} \mathsf{\tilde{S}}_d^{(0,1)} \ket{\overline{0}}.
    \end{align}
    Moreover, notice that for any $k \in \mathbf{Z}_d \backslash \{p',q'\}$,
    \begin{align}
        \bra{\overline{k}} \mathsf{\tilde{S}}_d^{(p',q')} \mathsf{\tilde{S}}_d^{(p',q')} \ket{\overline{k}} &= \bra{\overline{0}} \mathsf{\tilde{S}}_d^{(p' \ominus k, q' \ominus k)} \mathsf{\tilde{S}}_d^{(p' \ominus k, q' \ominus k)} \ket{\overline{0}}\\
        &= \sum_{\mathbf{w} \in f(\mathcal{W}_{d,N}/\hat{\mathcal{R}})} |\alpha_\mathbf{w}|^2 \sum_{\mathbf{w}' \in [\mathbf{w}]_{\hat{\mathcal{R}}}^{}} [(w'_{p \ominus k}+1)w'_{q' \ominus k} + w'_{p \ominus k}(w'_{q' \ominus k}+1)]^2\\
        &= \sum_{\mathbf{w} \in f(\mathcal{W}_{d,N}/\hat{\mathcal{R}})} |\alpha_\mathbf{w}|^2 \sum_{\mathbf{w}'' \in [\mathbf{w}]_{\hat{\mathcal{R}}}^{}} [(w''_1+1)w''_2 + w''_1(w''_2+1)]^2\\
        &= \bra{\overline{0}} \mathsf{\tilde{S}}_d^{(1,2)} \mathsf{\tilde{S}}_d^{(1,2)} \ket{\overline{0}}\\
        &= \bra{\overline{d-1}} \mathsf{\tilde{S}}_d^{(0,1)} \mathsf{\tilde{S}}_d^{(0,1)} \ket{\overline{d-1}}.
    \end{align}
    This shows that $\bra{\overline{k}} \mathsf{\tilde{S}}_d^{(p',q')} \mathsf{\tilde{S}}_d^{(p',q')} \ket{\overline{k}} = c_{(\mathsf{\tilde{S}}_d^{(p',q')}, \mathsf{\tilde{S}}_d^{(p',q')})}$ for all $k \in \mathbf{Z}_d$. By a similar argument, we can show that $\bra{\overline{k}} \mathsf{\tilde{A}}_d^{(p',q')} \mathsf{\tilde{A}}_d^{(p',q')} \ket{\overline{k}} = c_{(\mathsf{\tilde{A}}_d^{(p',q')}, \mathsf{\tilde{A}}_d^{(p',q')})}$, $\bra{\overline{k}} \mathsf{\tilde{S}}_d^{(p',q')} \mathsf{\tilde{A}}_d^{(p',q')} \ket{\overline{k}} = c_{(\mathsf{\tilde{S}}_d^{(p',q')}, \mathsf{\tilde{A}}_d^{(p',q')})}$, and $\bra{\overline{k}} \mathsf{\tilde{A}}_d^{(p',q')} \mathsf{\tilde{S}}_d^{(p',q')} \ket{\overline{k}} = c_{(\mathsf{\tilde{A}}_d^{(p',q')}, \mathsf{\tilde{S}}_d^{(p',q')})}$ for all $k \in \mathbf{Z}_d$. [\textit{The specific action of $\mathsf{\tilde{A}}_d^{(p',q')} \mathsf{\tilde{A}}_d^{(p',q')}$, $\mathsf{\tilde{S}}_d^{(p',q')} \mathsf{\tilde{A}}_d^{(p',q')}$, and $\mathsf{\tilde{A}}_d^{(p',q')} \mathsf{\tilde{S}}_d^{(p',q')}$ on any given basis vector differs from that of $\mathsf{\tilde{S}}_d^{(p',q')} \mathsf{\tilde{S}}_d^{(p',q')}$, but the same reasoning (using double permutation invariance) still works.}]

    \vskip 16pt

    It remains to show that $\bra{\overline{i}} \tilde{\Gamma}_d^{(p,q)} \tilde{\Upsilon}_d^{(p,q)} \ket{\overline{j}} = 0$ for all $i,j \in \mathbf{Z}_d$ with $i \neq j$ and $0 \leq p < q \leq d-1$. Note that it suffices to do this for $\Gamma = \Upsilon = \mathsf{S}$; a similar argument works for combinations with $\mathsf{A}$. Fix $i',j' \in \mathbf{Z}_d$ with $i \neq j$ and $0 \leq p' < q' \leq d-1$. Observe that
    \begin{align}
        \label{action of double S}
        \mathsf{\tilde{S}}_d^{(p' \ominus j', q' \ominus j')} \mathsf{\tilde{S}}_d^{(p' \ominus j', q' \ominus j')} \ket{\overline{0}} = \sum_{\mathbf{w} \in f(\mathcal{W}_{d,N}/\hat{\mathcal{R}})} |\alpha_\mathbf{w}|^2 \sum_{\mathbf{w}' \in [\mathbf{w}]_{\hat{\mathcal{R}}}^{}} \Big\{
        & (w_{p' \ominus j'}+2)(w_{p' \ominus j'}+1) \ket{S_{\mathbf{u}'}}\nonumber\\
        & + [w_{p' \ominus j'}(w_{q' \ominus j'}+1) + (w_{p' \ominus j'}+1)w_{q' \ominus j'}] \ket{S_{\mathbf{w}'}}\nonumber\\
        & + (w_{q' \ominus j'}+2)(w_{q' \ominus j'}+1) \ket{S_{\mathbf{v}'}}\Big\},
    \end{align}
    where $\mathbf{u}'$ and $\mathbf{v}'$ are defined by
    \begin{itemize}[itemsep=2pt]
        \item $u'_{p' \ominus j'} = w'_{p' \ominus j'}+2$, $u'_{q' \ominus j'} = w'_{q' \ominus j'}-2$, and $u'_i = w'_i$ for all $i \in \mathbf{Z}_d \backslash \{p' \ominus j',q' \ominus j'\}$
        \item $v'_{p' \ominus j'} = w'_{p' \ominus j'}-2$, $v'_{q' \ominus j'} = w'_{q' \ominus j'}+2$, and $v'_i = w'_i$ for all $i \in \mathbf{Z}_d \backslash \{p' \ominus j',q' \ominus j'\}$
    \end{itemize}    
    We then have
    \begin{equation}
        \bra{\overline{i'}} \mathsf{\tilde{S}}_d^{(p',q')} \mathsf{\tilde{S}}_d^{(p',q')} \ket{\overline{j'}} = \bra{\overline{i' \ominus j'}} \mathsf{\tilde{S}}_d^{(p' \ominus j', q' \ominus j')} \mathsf{\tilde{S}}_d^{(p' \ominus j', q' \ominus j')} \ket{\overline{0}} = 0,
    \end{equation}
    where the last equality follows from Eq. (\ref{action of double S}) and the hypothesis that $\mathcal{C}$ is sparse.
\end{proof}

\section{Proof of Theorem 10}
\renewcommand{\thetheorem}{10}
\begin{theorem}
    Let $\mathbf{a} = ((d-1)^2,\underbrace{0,\cdots,0}_{d-1})$, $\mathbf{b} = (d+1,d(d-3),\underbrace{0,\cdots,0}_{d-2})$, $\mathbf{c} = (0,\underbrace{d-1,\cdots,d-1}_{d-1})$. For each odd integer $d \geq 5$, there exists an error-correcting SDPI code with the logical code word
    \begin{equation}
        \label{general code word}
        \ket{\overline{0}} = \alpha_\mathbf{a}\ket{\ket{S_\mathbf{a}}} + \alpha_\mathbf{b}\ket{\ket{S_\mathbf{b}}} + \alpha_\mathbf{c}\ket{\ket{S_\mathbf{c}}}
    \end{equation}
    for some coefficients $\alpha_\mathbf{a}$, $\alpha_\mathbf{b}$, $\alpha_\mathbf{c}$. In particular, these coefficients are given by
    \begin{align}
        \alpha_\mathbf{a} &= \sqrt{\frac{d^3 - 5d^2 + d - 1}{2d^4 - 6d^3}},\\
        \alpha_\mathbf{b} &= \sqrt{\binom{(d-1)^2}{\mathbf{b}}^{-1} \frac{(d-1)(1-d\alpha_\mathbf{a}^2)}{d^2+d}},\\
        \alpha_\mathbf{c} &= \sqrt{\binom{(d-1)^2}{\mathbf{c}}^{-1} \left(1 - \alpha_\mathbf{a}^2 + \frac{1-d}{d^2+d}\alpha_\mathbf{b}^2\right)}.
    \end{align}
\end{theorem}

\begin{proof}
    With the code word in Eq. (\ref{general code word}), we get the following system of quadratic forms:
    \begin{align}
        (d-1)^2 ||\alpha_\mathbf{a}||^2 + (3d-1) ||\alpha_\mathbf{b}||^2 - (d-1) ||\alpha_\mathbf{c}||^2 &= 0 \label{coeff 1}\\
        (d-1)^4 ||\alpha_\mathbf{a}||^2 - (d^4 - 5d^3 + 4d^2 - 5d + 1) ||\alpha_\mathbf{b}||^2 + (d-1) ||\alpha_\mathbf{c}||^2 &= 0 \label{coeff 2}\\
        (d-1)^2 ||\alpha_\mathbf{a}||^2 + (2d^3 - 4d^2 - 3d - 1) ||\alpha_\mathbf{b}||^2 - (2d^2 - 3d + 1) ||\alpha_\mathbf{c}||^2 &= 0 \label{coeff 3}
    \end{align}
    Using \textsc{Mathematica}, we see that a non-trivial solution exists. The coefficients given in Eqs. (\ref{coeff 1}) to (\ref{coeff 3}) form a (normalized) solution to the system of quadratic forms.
\end{proof}










\begin{figure}[b]
	\centering
	\includegraphics[width=\columnwidth]{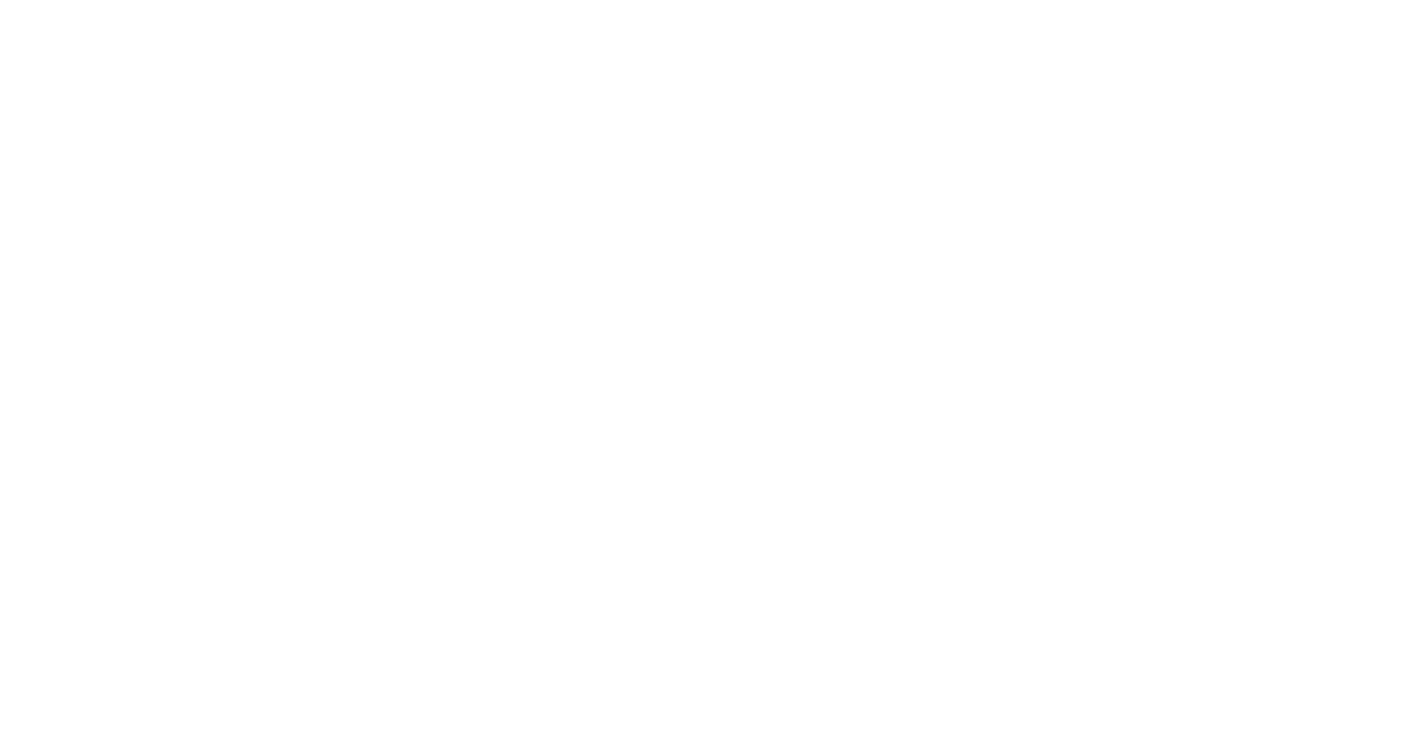}
\end{figure}

\bibliography{supplement}